\documentclass[10pt]{article}
\usepackage{graphicx} % Required for inserting images
\usepackage{caption}
\usepackage{enumitem}
\usepackage[a4paper, total={6in, 8in}]{geometry}
\usepackage{amsfonts}
\usepackage{tikz} 
\usepackage{hyperref}
\usepackage{algorithm}
\usepackage{amsthm}
\usepackage{booktabs}
\usepackage{amsmath,amssymb,amsthm,fullpage}
\usepackage{xparse}
\usetikzlibrary{arrows.meta,calc,positioning,decorations.pathreplacing}
\usepackage{cleveref}
\usepackage[super]{nth}
\usepackage[noend]{algpseudocode}
\usepackage[english]{babel}
\usepackage{thmtools}
\usepackage{thm-restate}
\newtheorem{theorem}{Theorem}
\newtheorem{observation}[theorem]{Observation}

\newtheorem{lemma}[theorem]{Lemma}
\newtheorem{corollary}[theorem]{Corollary}

\theoremstyle{definition}
\newtheorem{definition}{Definition}
\newtheorem*{remark}{Remark}

\newcommand{\reals}{\mathbb{R}}

\DeclareMathOperator{\range}{range}

\usepackage[parfill]{parskip}

\usetikzlibrary{arrows.meta,calc,positioning,decorations.pathreplacing}

\NewDocumentCommand{\interval}{O{0} m m O{} O{black} O{} O{}}{%
  \draw[very thick,#5] (#2,#1) -- (#3,#1);
  \draw[very thick,#5] (#2,#1+0.12) -- (#2,#1-0.12);
  \draw[very thick,#5] (#3,#1+0.12) -- (#3,#1-0.12);
  \IfNoValueF{#4}{\node[above=2pt] at ($(#2,#1)!0.5!(#3,#1)$) {#4};}
  \IfNoValueF{#6}{\node[below] at (#2,#1-0.12) {#6};}
  \IfNoValueF{#7}{\node[below] at (#3,#1-0.12) {#7};}
}

\NewDocumentCommand{\intervalthin}{O{0} m m O{} O{black} O{} O{}}{%
  \draw[#5] (#2,#1) -- (#3,#1);
  \draw[#5] (#2,#1+0.12) -- (#2,#1-0.12);
  \draw[#5] (#3,#1+0.12) -- (#3,#1-0.12);
  \IfNoValueF{#4}{\node[above=2pt] at ($(#2,#1)!0.5!(#3,#1)$) {#4};}
  \IfNoValueF{#6}{\node[below] at (#2,#1-0.12) {#6};}
  \IfNoValueF{#7}{\node[below] at (#3,#1-0.12) {#7};}
}

\makeatletter
\def\BState{\State\hskip-\ALG@thistlm}
\makeatother
\title{Revisiting Real-Time Interval and Throughput Maximization}
\author{Allan Borodin, Changdao He, Nadim Mottu}
\date{\today}

\begin{document}

\maketitle

\begin{abstract}
    Job throughput maximization is the central maximization problem in scheduling. Interval scheduling is the special case of throughput maximization when jobs are intervals and therefore there is no {\it slack} available in which to schedule a job. It is interesting to know to what extent results for interval scheduling can be extended to the more general throughput problem in the real-time model. For the unweighted and proportionally weighted throughput problem (where the weight or value $w_i$ of a job $J_i$ is its processing time $p_i$), there are constant competitive real-time scheduling algorithms using {\it preemption with restarting}. More generally, the result for proportionally weighted interval scheduling can be extended to C-Benevolent weight functions.  We also introduce a new real-time model in which jobs are announced before the actual release time of a job. We show that with sufficient {\it advance notice}, we can obtain a constant competitive ratio for proportionally weighted throughput {\it without any preemption}. 
    However, this advance notice result does not extend to arbitrary C-Benevolent and D-Benevolent weight functions. Finally, we show that unlike interval scheduling, unweighted throughput using {\it preemption with revoking} admits no constant competitive ratio when the number of distinct processing times is unrestricted. More precisely, for instances with at most $k$ distinct processing times, we give a lower bound of $1/(k+1)$ and a deterministic $1/(2k)$-competitive algorithm.
\end{abstract}

% \begin{abstract}
%     Job throughput maximization is the central maximization problem in scheduling. Interval scheduling is the special case of throughput maximization when jobs are intervals and therefore there is no {\it slack} available in which to schedule a job. It is interesting to know to what extent results for interval scheduling can be extended to the more general throughput problem in the real-time model. We show that unlike interval scheduling, unweighted throughput using {\it preemption with revoking} admits no constant competitive ratio when the number of distinct processing times is unrestricted. More precisely, for instances with at most $k$ distinct processing times, we give a lower bound of $1/(k+1)$ and a deterministic $1/(2k)$-competitive algorithm. On the other hand, for the unweighted and proportionally weighted throughput problem (where the weight or value $w_i$ of a job $J_i$ is its processing time $p_i$), there are constant competitive real-time scheduling algorithms using {\it preemption with restarting}. More generally, the result for proportionally weighted interval scheduling can be extended to C-Benevolent weight functions.  We also introduce a new real-time model in which jobs are announced before the actual release time of a job. We show that with sufficient {\it advance notice}, we can obtain a constant competitive ratio for proportionally weighted throughput {\it without any preemption}. 
%     However, this advance notice result does not extend to arbitrary  C-Benevolent and D-Benevolent weight functions.  
% \end{abstract}

\newpage
\section{Introduction}
\label{sec:intro} 

The Weighted Throughput Maximization problem is  a classic problem in scheduling and arguably it is the central maximization scheduling problem. It has been well studied in different online and offline computational models. We will restrict attention to the case of single machine scheduling noting that results for single machine maximization problems can usually be extended to multiple machines, sometimes improving the competitive ratio.  The basic clairvoyant hard deadline throughput maximization problem is defined as follows. 
A job $J_i$ is defined by a 4-tuple: $(r_i,p_i,d_i,w_i)$ where $r(J_i)=r_i \leq r_{i+1}$ is the release time (i.e, the time the job can start executing), $p(J_i)=p_i$ is the processing time, $d(J_i)=d_i$ is the deadline (i.e., the time by which the job must complete), and $w(J_i)=w_i$ is the weight or value  of a job if scheduled and completes by its deadline; that is, a job can possibly be scheduled at any time starting at time $t_i$ if $r_i \leq t_i$ and $t_i + p_i \leq d_i$. A feasible schedule is one in which the scheduled jobs do not intersect. Given a set of jobs ${\cal J}$, the objective is to create a feasible schedule subset ${\cal C} \subseteq {\cal J}$ so as to maximize the $\sum_{J_i \in {\cal C}} w_i$. % COMMENT: Do we want to allow jobs to be scheduled at the deadline of another job?

The slack $s_i$ of a job $J_i$ is defined as $d_i - (r_i + p_i) \geq 0$. 
An important and also well studied special case of Throughput Maximization is interval selection where each job is an interval; that is,  $r_i + p_i = d_i$ or equivalently the slack $s_i = 0$ for all jobs. For one machine, interval scheduling is a selection problem as there is no choice as to where to schedule an interval. 

We will focus on the problem of deterministically scheduling jobs in the real-time model where the scheduling algorithm only learns about a job $J_i$ at time $r_i$.  We note that in the scheduling literature, algorithms in the real-time model are often referred to as online algorithms. We regard online algorithms as those where the $i^{th}$ job arrives at step $i$ without the constraint that $r_i \leq r_{i+1}$. In the real-time model, we can schedule a job at any time $t_i \geq r_i$ and we are usually allowed some form of preemption where the execution of a job can be interrupted any time. In scheduling theory, preemption usually means {\it preemption with resumption} where an interrupted job can be resumed; that is, if $J_i$ has already executed for $e_i < p_i$ time when interrupted, it only needs to execute for an additional $p_i-e_i$ time units. This is in contrast to the { \it preemption with  restarting model} where to reschedule an interrupted job, the job must start executing from its beginning; that is, it must still execute for the full time $p_i$. In the {\it preemption with revoking model}, a job may be aborted at any time but then it is permanently lost. The revoke model is usually studied in the online model and not in the real-time model. 

Given how central and well-studied these problems are, it is surprising that there are still interesting questions to be asked. We will focus on deterministic algorithms in the real-time model with restarting. In the next section we will present previous results that are most relevant to our work in this paper. Here are our main  results. 

\begin{itemize}
    \item For proportional weights (where $w_i = p_i)$, we extend the $\frac{1}{4}$-competitive algorithm from Woeginger \cite{Woeginger94} for interval selection to derive and analyze a $\frac{1}{5}$-competitive deterministic algorithm for the throughput problem with proportional weights in the real-time model with restarting. More generally, as in Woeginger's \cite{Woeginger94} result for interval selection, this algorithm and its analysis applies to all C-Benevolent and D-Benevolent weight functions.  
    
    \item We introduce a new real-time model where the algorithm is given advance notice proportional to the processing time that allows us to derive constant competitive ratios for the throughput problem with proportional weights in the real-time model {\it without preemption}. This result does not extend to arbitrary C-Benevolent and D-Benevolent weight functions. 
    
    \item In contrast to interval selection, for the unweighted throughput problem with preemption and revoking, no deterministic algorithm can obtain a constant competitive ratio when the number of distinct processing times is unrestricted. More precisely, for instances with at most $k$ distinct processing times, we prove a lower bound of $\frac{1}{k+1}$ and give a deterministic $\frac{1}{2k}$-competitive algorithm. This contrasts with the real-time model with restarting, where Hoogeveen, Potts and Woeginger~\cite{HoogeveenPW00} give a deterministic $\frac{1}{2}$-competitive algorithm for unweighted throughput.
\end{itemize}

    We will also discuss the arbitrarily weighted and unweighted throughput problems for the special case of single processing time instances (i.e., $p_i=p$ for some fixed $p$ for all jobs $J_i$) and extensions to input instances with at most $k$ distinct processing times.

% \end{section}
\section{Related Work}
\label{sec:related}
 Although our work only concerns deterministic algorithms in the real-time model, we will also mention analogous results for the online model and for randomized algorithms.  

{\bf Note: For consistency we will state all competitive ratios as fractions $\leq 1$}. Our positive results are strict competitive ratios; that is, we will show $\frac{ALG}{OPT} \geq \rho$ for all input  instances to claim that an algorithm is $\rho$-competitive. All of our negative (impossibility) results can be made asymptotic by adding disjoint copies of a nemesis construction.

{\bf Interval Selection}

Online Unweighted Interval selection (when the $i^{th}$ job arrives at step $i$ unrelated to the release or starting time of the interval) has been considered as a  special case of call admission. It is essentially the call admission problem for a line graph. In the online model, preemption with restarting or resumption doesn't make sense but revoking does make sense.  In the online model, Garey et al \cite{GarayGKMY97} show that there is no deterministic online algorithm with revoking that can achieve a constant competitive ratio for {\it unweighted} interval selection.  They show that no algorithm can have a competitive ratio better than $\frac{1}{\Omega(\log N)}$ in terms of call admission for a $N$ node line graph. Their nemesis contruction can be restated as a $\frac{2}{n}$ inapproximation where $n$ is the length of the input sequence. Borodin and Karavasilis \cite{BorodinK23}   considered the online interval selection problem for instances in which there are at most $k$ distinct interval lengths. They showed that $\frac{1}{2k}$ is the optimal deterministic competitive ratio for such instances in the online model with revoking. For randomized online algorithms, Emek, Halld{\'{o}}rsson and Ros{\'{e}}n \cite{EmekHR2016} provide a $\frac{1}{6}$-competitive algorithm for unweighted interval selection. In contrast, in the real-time model, Woeginger \cite{Woeginger94} shows that the optimal offline greedy algorithm (i.e., ordering by non-decreasing finishing times) of Faigle of Nawijn \cite{FaigleN91} can easily be converted to be an optimal real-time algorithm with revoking. 

Turning to Weighted Interval Selection, Lipton and Tomkins~\cite{Lipton} considered the real-time model without preemption. They analyzed the case of two distinct processing times where jobs have weights proportional to their lengths, and designed a \(1/2\)-competitive randomized algorithm and showed that no algorithm can achieve a better competitive ratio in that model.

Woeginger \cite{Woeginger94} shows that there does not exist a deterministic  real-time algorithm (with revoking) when the weight of an interval is $w(I) = |I|^d$ for $0 < d < 1$ where $|I|$ is the length of the interval $I$. We note that the length of a interval is the processing time when thinking of intervals as a special case of throughput maximization and for interval selection revoking is equivalent to restarting.  
Woeginger \cite{Woeginger94} defined the class of C-Benevolent and D-Benevolent weight functions.

C-Benevolent weight functions \(w(I) = f(|I|)\)  are strictly increasing convex functions \(f\) for \(|I|> 0\) and \(f(0) = 0\). 
D-Benevolent weight functions \(w(I) = f(|I|)\) are non-increasing functions for \(|I| > 0\) and \(f(0)=0\). We  note that C-Benevolent functions include proportional weights \(w(I) = c \cdot|I|\) and D-Benevolent functions include the unweighted case \(w_i = c\) (for any constant \(c\)).

Woeginger considered deterministic algorithms in the real-time model (with revoking) and showed that there is an algorithm that achieves a $\frac{1}{4}$-competitive ratio for both C-Benevolent and D-Benevolent weights. He also showed that this is the optimal ratio for all C-Benevolent weight functions and that there is a D-Benevolent weight function for which $\frac{1}{3}$ is the best possible ratio.  He also showed that the same algorithm provides a $\frac{1}{4}$ competitive ratio for arbitrarily weighted instances in which all intervals have the same length. Woeginger's algorithm is the motivation and starting point for our throughput maximization result for C-Benevolent and D-Benevolent instances. With regard to randomized real-time algorithms (with revoking), Canetti and Irani \cite{CanettiI98} prove non constant lower bounds for arbitrarily weighted instances, where the lower bound is in terms of the ratio between the largest and smallest weights. Fung, Poon, and Zheng \cite{fung_improved_2014} present 1-bit barely random real time algorithms with competitive ratio $\frac{1}{2}$ for arbitrarily weighted single length instances\footnote{Fung et al state all their results for the real-time model but the single instance result holds for the online model. The result for single length instances extends to a $\frac{1}{2k}$ ratio for instances with at most $k$ distinct lengths using $\log k$ random bits by the usual classify and randomly select procedure.}, monotone instances, and C-Benevolent and D-Benevolent weighted instances. Epstein and Levin \cite{EpsteinL10} provide  1.693 randomized lower bounds for equal length and C-Benevolent instances and a 1.5 lower bound for D-Benevolent instances assuming the weight function is surjective. All these results are for the real-time model with revoking.

{\bf Throughput Maximization} 
The problem of optimally maximizing the number of on-time jobs ( equivalently, optimally minimizing the number of tardy jobs), originates in Lenstra et al.~\cite{Lenstra77}, who showed that the problem is NP-hard in the non-preemptive case. This NP-harndess also applies to the preemptive-restart model since in that model an \(OPT\) solution will not use restarting. Lawler~\cite{Lawler90} gave the first polynomial-time algorithm for the preemption-resume model using dynamic programming, and Baptiste~\cite{Baptiste99} later improved the running time to $O(n^4)$, where $n$ is the number of jobs in the input.

Although it is possible to have optimal online algorithms for $NP$-hard problems (since there are no time complexity constraints), we do not expect to obtain optimum online or real-time algorithms for the throughput problem.  To our knowledge, throughput maximization (beyond interval scheduling)  is not studied in the online setting. In addition to the general throughput problem, the problem has been studied for the special case of instances having a single processing time. The single processing time problem with restarting is related to the Quality of Service (QoS) problem for packet routing. The QoS problem can be seen as an integral version of throughput maximization  with unit processing time for all packets. See Vesely et al \cite{VeselyCJS22} for the optimal deterministic algorithm with competitive ratio $\frac{1}{\phi} \approx .618$ where $\phi$ is the golden ratio. The current best randomized algorithm with ratio $1-\frac{1}{e}$ is due to Chin et al \cite{ChinFGHHJLTZZZ15}.  

Hoogeveen, Potts and Woeginer \cite{HoogeveenPW00} adapt the offline Earliest Completion Time algorithm to derive a $\frac{1}{2}$-competitive real-time algorithm with restarting and prove that this is the optimal competitive ratio in this model. Chrobak et al \cite{chrobak_online_2007} study the unweighted case where instances have a single processing time. For this special case, they provide a deterministic algorithm with competitive ratio $\frac{2}{3}$ and prove that this is the optimal competitive ratio. With regard to randomized real-time algorithms for the unweighted throughput problem, Chrobak et al paper gives a 1-bit $\frac{3}{5}$ competitive barely random algorithm for the unweighted single processing time case. Perhaps surprisingly, for the real-time model {\it with preemption-resumption}, Baruha, Haritsa and Sharma \cite{BaruhaHS94} show that no deterministic algorithm can achieve a constant ratio for the unweighted throughput problem. We note that although  every real-time algorithm with restarting is simulated by a real-time algorithm with resuming, an optimal offline algorithm can also benefit by resuming, whereas the optimal offline algorithm with restarting never needs to restart. However, if the slack $s_i = d_i-r_i - p_i$ is sufficiently large, Lucier et al \cite{LucierMNY13} provide a constant competitive deterministic algorithm for the unweighted throughput problem.  While Kalyanasundaram and Pruhs \cite{KalyanasundaramP03} provide a constant $c$-competitive barely random algorithm  for the unweighted throughput problem, the constant $c$ is (remarkably) impractical.    

For the weighted throughput problem, the impossibility results for the weighted interval selection problem immediately show that there cannot be a constant competitive real-time algorithm (with restarting) when the weights are arbitrary. Goldman, Parwatikar and Suri \cite{goldman_online_2000} consider the proportionally weighted throughput problem for instances having a single processing time and instances having at most two distinct processing times. This show that without any preemption, $\frac{1}{2}$ is the optimal deterministic ratio for the single processing time case (which is equivalent to the unweighted case when there is a single processing time), and provide a $\frac{1}{4}$ randomized competitive ratio for the case of two distinct processing times.  To the best of our knowledge, the throughput problem has not been previously studied for real-time algorithms (with restarting) for restricted classes of weighted instances (beyond single length instances). 
Fung, Poon and Zhang \cite{fung_improved_2014} give a barely random $\frac{1}{3}$ competitive algorithm for arbitrarily weighted instances with a single processing time. As in the weighted interval selection problem, this result extends to $k$ distinct processing times to obtain a $\frac{1}{3k}$ randomized competitive ratio. In the real-time model with resumption, Koren and Shasha \cite{KorenS92} provide a deterministic algorithm with competitive ratio $\frac{1}{(1+\sqrt{r})}$ for $r$-proportional weight functions that satisfy 
$p_i \leq w_i \leq r \cdot  p_i$ for every job $J_i$. This matches the deterministic inapproximation in Baruah et al. \cite{BaruahKMMRRSW92}. By making $r$ a function of the processing time, we obtain a  non-constant inapproximation for this class of weight functions.

\section{C-Benevolent Throughput with Restarting}
    \label{sec:proportional weights}
    In this section we concern ourselves with input sequences in which the value of a job is a C-Benevolent function as defined in the previous section. When the slack is zero and revoking is permitted, the work of Woeginger \cite{Woeginger94} gives us a tight competitive ratio for interval selection. That is to say, he provides an algorithm that is \(\frac{1}{4}\)-competitive for C-benevolent input sequences, and proves that no algorithm can do better. We show that we can extend his algorithm for arbitrary slack by using restarting instead of revoking.   In this case, the resulting algorithm is \(\frac{1}{5}\)-competitive.
% In Theorem \ref{thm:k-length-revoke-lower} we show that revoking is not sufficient. 

We define the following algorithm, we call the \(\tau\)-\textsc{Persist} algorithm: 

\begin{algorithm}[H]
    \caption{Code for \(\tau\)-\textsc{Persist} where \(\tau > 1\)}
    \label{alg:tau-persist}
    \begin{algorithmic}[1]    
        \State When a job \(J_k\) is announced at time \(T\):
        \State \(J_c \leftarrow\) job currently being processed or \(\emptyset\) if there are no such jobs
        \State \(\sigma_c \leftarrow\) the time \(J_c\) was started if it is not \(\emptyset\). 
        \If {\(J_c = \emptyset \)}
        \State Start processing \(J_k\)
        \ElsIf{\(w(J_k) > \tau \cdot w(J_c)\) or (\(T+p_k < \sigma_c + p_c\) and \(w(J_k) \geq w(J_c)\))}\label{alg_line:tau-persist-interrupt-condition}
        \State Interrupt \(J_c\)
        \State Start processing \(J_k\)
        \EndIf
        \State When a job \(J_c\) has completed
        \State \(Q\leftarrow \) the set of all jobs that have not expired and have not been completed
        \State \(J_k \leftarrow\) the job with maximum weight in \(Q\)
        \State Start processing \(J_k\)
    \end{algorithmic}
\end{algorithm}

In other words, this algorithm processes the largest job it can, but only interrupts a job when it can be replaced with a job having \(\tau\) times more weight or if it can be completed earlier than the current job while having at least as much weight\footnote{It will never be the case that a job can be completed earlier than the current job while having at least as much weight when the input sequence is C-benevolent. With D-benevolent instances, however, this may happen.}. We note that this algorithm uses \emph{restarts}, meaning that if a job is interrupted, it is added to the set of jobs that can be scheduled. This means that it can be scheduled after another job is completed.

We now analyze the performance of \(\tau\)-\textsc{Persist}. This analysis borrows ideas from Woeginger's analysis of the interval selection algorithm he calls \(\textsc{Heu}\) \cite{Woeginger94}. The algorithm \(\textsc{Heu}\) acts identically to how \(2\)-\textsc{Persist} acts on input sequences where slack is zero.

We begin with some definitions that will be useful for our charging arguments. We define the range of a job \(J_i\) as follows. Suppose that \(J_i\) was scheduled at a point in which the machine was idle and thus did not interrupt another job, then its \emph{predecessor chain} is an empty list. Otherwise, suppose that \(J_i\) was scheduled interrupting \(J_c\). Then the predecessor chain of \(J_i\) is the predecessor chain of \(J_c\), along with the tuple \((J_c, \sigma'_c)\) appended to the beginning of the list, where \(\sigma'_c\) is the value of start time of \(J_c\) before it was interrupted. The predecessor chain of a job is defined adaptively as the algorithm executes. When a job is interrupted, its predecessor chain then becomes undefined until it is scheduled again later. This means that when the algorithm has finished executing, only jobs completed by the algorithm have a predecessor chain. The \emph{successor}, of \(J_i\) is the highest processing time job that can be scheduled during the time our algorithm is processing \(J_i\). If nothing intersects it, then the successor is treated as a job of length and weight equal to \(0\). Suppose that the predecessor chain of \(J_i\) is \((\sigma_1', J_1'), \ldots, (\sigma_h', J_h')\) and the successor of \(J_i\) is \(J_c\). We define the \emph{range} of \(J_i\) as the interval \([\sigma_h', \sigma_i + p_i+ p_c]\).

\begin{figure}[ht]
    \centering
    \begin{tikzpicture}[x=0.8cm,y=1.2cm,>={Stealth[length=2.5pt]}]
    % axis

    \interval{0}{15}[\(\range{J_k}\)][black][][]
    \interval[-1]{0}{2}[\(J_2'\)][black][\(\sigma_2'\)][]
    \interval[-2]{1}{5}[\(J_1'\)][black][\(\sigma_1'\)][]
    \interval[-3]{3}{11}[\(J_k\)][black][\(\sigma_k\)][\(\sigma_k +p_k\)]
    \interval[-4]{11}{15}[\(J_c\)][black][][\(\sigma_k +p_k+p_c\)]

    % \interval{1}{19}[][black][$r_1$][$d_1$]
    % \interval{6}{11.5}[][black][$a_1$][$b_1$]
    % \interval[-1]{6.5}{11.5}[][black][$r_2$][$d_2$]
    % \interval[-1]{10}{11.5}[][black][$r_2+s_2$]
    % \intervalthin[-2]{10}{10.5}[$\varepsilon$][black]
    % \interval[-2]{10.5}{11.5}[][black][$r_4$][$d_4$]
    \end{tikzpicture}
    \caption{The range of a job \(J_k\)}
    \label{fig:points}
\end{figure}

\medskip
\begin{observation}\label{obs:ranges}
    Suppose a job \(J_i\), not completed by \(\tau\)-\textsc{Persist} can be scheduled a time \(\sigma_i'\). Then \([\sigma_i', \sigma_i' + p_i)\) is contained in the range of some completed job.
\end{observation}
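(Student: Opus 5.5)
Let $J_i$ be a job that \(\tau\)-\textsc{Persist} never completes, and let \(\sigma_i'\) be a feasible start time, so that \(r_i \le \sigma_i'\) and \(\sigma_i' + p_i \le d_i\). Write \(I=[\sigma_i',\sigma_i'+p_i)\). The plan has three steps: show the machine is busy throughout \(I\), reduce to the completed job that ends a chain of interruptions, and then cover \(I\) with that job's range using the successor for the right end.

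\textbf{Step 1: the machine is never idle on \(I\).} Suppose it were idle at some \(t\in I\). Let \(t'\ge t\) be the earliest time after which the machine is again busy or the idle stretch ends. At the last completion before (or at) this idle stretch, the algorithm picks the maximum weight job from \(Q\). If \(J_i\) were pending and not expired, \(Q\) would be nonempty and the machine would not be idle. If \(r_i\le t\), then \(J_i\) is announced and unexpired at \(t\), since \(t+ (\sigma_i'+p_i-t)\le d_i\) and indeed \(t\le \sigma_i'+p_i\). So either it is started at once (at announcement with \(J_c=\emptyset\)), or it is in \(Q\). Either way the machine is busy at \(t\), a contradiction. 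Here I use the convention that "expired" means it can no longer finish by \(d_i\). Since \(r_i\le\sigma_i'\le t\) on \(I\), this case is the only one. Hence every point of \(I\) is covered by the execution of some job, either completed or later interrupted.

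\textbf{Step 2: every executing job lies inside a completed job's chain.} Take a job \(J\) running at time \(t\). If \(J\) is completed, its execution interval is inside \(\range{J}\). If \(J\) is interrupted at time \(T\), it is interrupted by \(J_k\). Follow interruptions forward, \(J\to J_k\to\cdots\), until we reach the last job in this sequence. That job is not interrupted, so by Step 1 reasoning it must complete. Its predecessor chain contains \(J\) with its start time, and more generally all earlier links. So \(\range{}\) of the completed job begins at or before the start of the earliest chain element, and its pieces are contiguous from that start up to \(\sigma+p\) of the completed job.

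Thus the busy periods of the machine decompose into consecutive blocks. Each block is \([\sigma_h',\sigma_j+p_j)\) for a completed job \(J_j\), and \(\range{J_j}\) contains it.

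\textbf{Step 3: covering \(I\) by a single range, which is the main obstacle.} \(I\) might straddle the boundary between the block of \(J_j\) and the next block. Let \(J_j\) be the completed job whose block contains \(\sigma_i'\). If \(\sigma_i'+p_i\le \sigma_j+p_j\), we are done. Otherwise \(J_i\) overlaps the processing of \(J_j\) and can be scheduled then. Since \(\sigma_i'\) lies in \([\sigma_j,\sigma_j+p_j)\), it is feasible during \(J_j\)'s processing, so the successor \(J_c\) of \(J_j\) has \(p_c\ge p_i\). Then \(\sigma_i'+p_i< \sigma_j+p_j+p_c\), which is the right end of \(\range{J_j}\).

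If instead \(\sigma_i'\) lies strictly in an earlier chain element of \(J_j\)'s block, I expect to argue that \(I\) cannot cross \(\sigma_j+p_j\) unless the interval intersects \(J_j\)'s processing. In that case I would replace \(\sigma_i'\) by the instant \(\sigma_j\) and use monotonicity in the definition of the successor, read as "highest processing time job whose feasible window intersects the processing of \(J_j\)." The delicate point, and where I'd look most carefully, is interpreting "can be scheduled during" so that a job starting before \(\sigma_j\) but overlapping \(J_j\) qualifies as a successor candidate. With that reading, \(p_c\ge p_i\) still gives the bound.
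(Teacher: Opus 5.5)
Your route is the paper's. Its Case~1 (the interval misses the processing of $J_k$, so the machine is idle at $\sigma_i'$ and would have started $J_i$) is your Step~1 at the instant $\sigma_i'$, together with your remark that an interval ending inside a block lies in that block's range. Its Case~2 is your successor bound in Step~3. Two points need fixing.

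\textbf{Step~1 is false as stated.} $J_i$ is unexpired at $t$ only if $t+p_i\le d_i$, and $\sigma_i'+p_i\le d_i$ gives this only at $t=\sigma_i'$. Take proportional weights, $\tau=2$, a job $J_a$ with $r_a=0$ and $p_a=d_a=6$, and $J_i$ with $r_i=5$, $p_i=10$, $d_i=15$. Then $J_i$ does not interrupt $J_a$ (as $10\le 2\cdot 6$ and $5+10>6$), and it has expired when $J_a$ completes at time $6$. The machine is therefore idle on $[6,15)\subset[5,15)$, so ``every point of $I$ is covered by the execution of some job'' is wrong. Step~3 only uses that the machine is busy at $\sigma_i'$, and your argument is correct at that instant, so restrict Step~1 to $t=\sigma_i'$.

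\textbf{The sub-case $\sigma_i'<\sigma_j$ with $\sigma_i'+p_i>\sigma_j+p_j$.} Drop the idea of replacing $\sigma_i'$ by $\sigma_j$: $J_i$ need not be feasible at $\sigma_j$ when its deadline is tight. What you need is the reading that any job which can be scheduled so as to overlap $J_j$'s processing is a successor candidate. Then $p_c\ge p_i$ and $\sigma_i'+p_i<\sigma_j+p_c\le\sigma_j+p_j+p_c$ directly. The paper uses this same reading without comment: its Case~2 infers ``can be scheduled during $J_k$'' from mere intersection, and Lemma~\ref{lem:prop-charging} calls the successor the largest job ``that can be scheduled intersecting $J_k$ on the right''. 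So you assume nothing extra.

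Under the stricter ``can start during $J_j$'' reading, this sub-case needs separate treatment.
\begin{itemize}
\item For C-benevolent weights it cannot occur. Here $p_i>p_j$ forces $w_i>w_j>\tau\, w(J_v')$ for every $J_v'$ in the predecessor chain of $J_j$. If $J_i$ were released inside the block, it would have interrupted the chain. If released earlier, it would have been chosen from $Q$ instead of $J_h'$, or would have kept the machine from idling.
\item For D-benevolent weights it can occur. Use unit weights and write jobs as (release, processing time, deadline): $A=(0,10,10)$, $J_i=(1,12,13)$, $B=(2,5,7)$. Then $B$ interrupts $A$ and is the only completed job, and $[1,13)$ escapes $[0,7+p_c]$ whenever $p_c\le 5$.
\end{itemize}
So the intersecting reading is the one under which the statement holds in the generality the paper needs.
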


\begin{proof}
    Suppose for the sake of obtaining a contradiction that there is some job \(J_i\) which can be scheduled at time \(\sigma_i'\) such that \(J_i\) has not been completed elsewhere by the algorithm and that is not contained in the range of some job.
    
    The time \(\sigma_i'\) occurs after the start of some range since \(\tau\)-\textsc{Persist} always schedules jobs that are released if it is not currently processing a job. Let \(r_1 = \range(J_k)\) be the last range before the time \(\sigma_i'\)
    
    \begin{description}
    \item[Case 1:] \([\sigma_i', \sigma_i' + p_i)\) does not intersect \([\sigma_k, \sigma_k+p_k)\).
    Since \([\sigma_i', \sigma_i' + p_i)\) is not contained in the range of any job, no job is being scheduled at \(\sigma_i'\), otherwise \([\sigma_i', \sigma_i' + p_i)\) would be contained in the predecessor chain of some job. This means our algorithm should schedule \(J_i\) since it hasn't completed it. This contradicts the fact that \(J_i\) is not in the predecessor chain of \(J_h\).
    \item[Case 2:] \([\sigma_i', \sigma_i' + p_i)\) intersects \([\sigma_k, \sigma_k+p_k)\).
    In this case, \(J_i\) can be scheduled during the time \(J_k\) is being processed and thus has a processing time less than or equal to the processing time of the successor \(J_c\) of \(J_k\). Since \(\sigma_k + p_k + p_c\) is the latest that any job that can be scheduled during \(J_k\) can finish, \(J_i\) must complete before the end of the range, contradicting our assumption.
    \end{description}
\end{proof}

We will use Karamata's inequality \cite{karamata} in our analysis of $\tau$-Persist:
\medskip
\begin{lemma}[Karamata's inequality]
    Suppose that some function \(f\) is C-benevolent. Suppose \(x_1 \leq x_2 \leq \ldots \leq x_n\) and \(y_1 \leq y_2 \leq \ldots y_n\). Furthermore, suppose that \(\sum_{i=1}^j x_i \leq \sum_{i=1}^j y_i\) for all \(0<j \leq n\). Then \(\sum_{i=1}^n f(x_i) \leq \sum_{i=1}^n f(y_i)\).
\end{lemma}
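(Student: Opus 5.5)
The natural route is the standard Abel-summation proof of Karamata's inequality, followed directly on the lemma's hypotheses: increasing orderings $x_1\le\dots\le x_n$, $y_1\le\dots\le y_n$, and prefix sums of the \emph{smallest} entries satisfying $X_j:=\sum_{i\le j}x_i\le Y_j:=\sum_{i\le j}y_i$. Working through it shows that the lemma, as worded, is false. The plan below therefore locates the failing step, gives a counterexample, and states the corrected form the $\tau$-\textsc{Persist} analysis should invoke.

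\emph{Step 1.} For each $i$, define the secant slope
\[
c_i=\frac{f(y_i)-f(x_i)}{y_i-x_i}\quad (x_i\ne y_i),
\]
and set $c_i$ to a one-sided derivative when $x_i=y_i$. Since both sequences are sorted the same way and $f$ is convex, $c_1\le c_2\le\dots\le c_n$.

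\emph{Step 2.} Apply Abel summation:
\[
\sum_i \bigl(f(y_i)-f(x_i)\bigr)=\sum_i c_i(y_i-x_i)=c_n(Y_n-X_n)+\sum_{j<n}(c_j-c_{j+1})(Y_j-X_j).
\]
The first term is nonnegative, provided $c_n\ge 0$, which holds when $f$ is increasing with $f(0)=0$ and the entries are nonnegative. In each remaining term, however, $c_j-c_{j+1}\le 0$ and $Y_j-X_j\ge 0$, so every summand is $\le 0$. The conclusion therefore does not follow.

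This sign mismatch is the main obstacle, and it is genuine rather than a flaw in the method. Take $f(x)=x^2$, $x=(\tfrac12,\tfrac32)$, $y=(1,1)$. Both sequences are nondecreasing, and the prefix sums satisfy $\tfrac12\le 1$ and $2\le 2$. Yet $\sum f(x_i)=\tfrac14+\tfrac94=\tfrac52>2=\sum f(y_i)$. So the stated lemma cannot be proved as worded.

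\emph{Step 3 (the repair).} The intended statement is the weak-submajorization form of Karamata: sort both sequences nonincreasingly, or equivalently require $\sum_{i>n-j}x_i\le\sum_{i>n-j}y_i$ for all $j$. Under that ordering the slopes satisfy $c_1\ge\dots\ge c_n\ge 0$. Both factors in each Abel term, $c_j-c_{j+1}$ and $Y_j-X_j$, are then nonnegative, and the last term $c_n(Y_n-X_n)$ is nonnegative as well. The inequality $\sum f(x_i)\le\sum f(y_i)$ follows at once.

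When this lemma is used in the charging argument for $\tau$-\textsc{Persist}, I would check that the lengths of \textsc{OPT} jobs charged to a range are dominated by the range's pieces in exactly this largest-first sense.
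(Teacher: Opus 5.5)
Your diagnosis and your repair are both correct. With the nondecreasing ordering printed in the statement, the hypothesis controls only sums of the \emph{smallest} entries. Your instance $f(x)=x^2$, $x=(\tfrac12,\tfrac32)$, $y=(1,1)$ satisfies every hypothesis yet gives $\tfrac52>2$, so the lemma as printed is false.

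The paper gives no proof of the lemma; it simply cites Karamata~\cite{karamata}. Its only use of the lemma is in the proof of Lemma~\ref{lem:prop-charging}. There it sorts both length lists in \emph{non-increasing} order and establishes prefix-sum domination of the largest entries, which is exactly your Step~3 hypothesis. So the printed ordering is a slip, and your corrected statement is the one the paper actually relies on.

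Your Abel-summation argument for it (the Tomi\'c--Weyl weak-majorization inequality) is sound. Compared with the bare citation, it buys something concrete. Karamata's classical inequality assumes equal totals $X_n=Y_n$, which the charging argument does not provide. In your proof the surplus is absorbed by the term $c_n(Y_n-X_n)\ge 0$, which is exactly where monotonicity of $f$ is used.

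When writing it up, fix the tie convention: for instance, always take the right derivative $f'_+(x_i)$, which keeps the slopes monotone. Also note that Step~1 needs convexity on an interval containing every entry, including the zeros the paper pads with. This holds when $f$ is convex on all of $[0,\infty)$, as in Woeginger's formulation. It fails under a literal reading of ``convex for $|I|>0$'': with $f(p)=1+p$ for $p>0$ and $f(0)=0$, the lists $x=(1,1)$, $y=(2,0)$ satisfy your corrected hypotheses yet give $4>3$.
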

\medskip
\begin{lemma}\label{lem:prop-charging}
    Suppose that we run \(\tau\)-\textsc{Persist} against a C-benevolent input sequence. Consider some job \(J_k\) completed by our algorithm. Any schedule that is contained in \(\range(J_k)\) which consists only of jobs that have not been completed by 
    \(\tau\)-\textsc{Persist}, that are not contained in the range of a later job, must have a total weight of less \(\frac{\tau^2}{\tau - 1} \cdot w_k\).
\end{lemma}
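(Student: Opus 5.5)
We bound the weight of such a schedule $S$ by splitting $\range(J_k)=[\sigma_h',\sigma_k+p_k+p_c]$ into three parts and handling each with a length bound and convexity.
\begin{itemize}
\item The \emph{predecessor part} is $[\sigma_h',\sigma_k)$, covered by the predecessor chain $(J_1',\sigma_1'),\dots,(J_h',\sigma_h')$ of $J_k$.
\item The \emph{core} is $[\sigma_k,\sigma_k+p_k)$.
\item The \emph{successor part} is $[\sigma_k+p_k,\sigma_k+p_k+p_c]$.
\end{itemize}

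\textbf{Chain geometry.} Since each chain job was interrupted by a job of more than $\tau$ times its weight (in the C-benevolent case the second interrupt condition never fires), we have $w(J_j')<\tau^{-1}w(J_{j-1}')<\dots<\tau^{-j}w_k$. The chain jobs are nested in time: $J_{j}'$ was running when $J_{j-1}'$ started, so $\sigma_{j}'\le\sigma_{j-1}'$, and the interval $[\sigma_j',\sigma_{j-1}')$ lies inside the execution window of $J_j'$.

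\textbf{Predecessor part.} We first show that any job $J\in S$ whose placement overlaps $[\sigma_j',\sigma_{j-1}')$ has weight at most $\tau\, w(J_j')$. At its release time $J$ could have interrupted $J_j'$ (or whatever job was then running, whose weight is at most $w(J_j')$ by the chain monotonicity). Since $J$ did not interrupt, $w(J)\le\tau w(J_j')$. If $J$ instead became available while the machine was idle or at a completion, the algorithm picks the maximum-weight available job; then $J$ would have been chosen or completed, contradicting $J\notin$ completed. So $w(J)\le\tau w(J_j')<\tau^{1-j}w_k$.

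Next we compare total weights rather than counting jobs. The segments $[\sigma_j',\sigma_{j-1}')$ have length at most $p(J_j')$. By Karamata's inequality, a family of pieces, each of length at most $L$ and with total length at most $L$, has $f$-weight at most $f(L)$. Applying this per segment charges the jobs of $S$ in segment $j$ at most $\tau w(J_j')$. Jobs straddling segments are charged to the segment containing their start, using the bound for the largest window they meet. Summing the geometric series gives $\sum_{j\ge1}\tau\cdot\tau^{-j}w_k=\frac{\tau}{\tau-1}w_k$.

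\textbf{Core.} Any $J\in S$ released during $[\sigma_k,\sigma_k+p_k)$ and not preempting $J_k$ has $w(J)\le\tau w_k$. Jobs released earlier but runnable in the core were available when $J_k$ started, or when $J_k$ was chosen at a completion; there $J_k$ had maximal weight, so $w(J)\le w_k$. Jobs of $S$ placed inside the core have total length at most $p_k$. Convexity together with $f(0)=0$ (superadditivity) then bounds their weight by $\tau w_k$ when the bound per piece is scaled by length. More carefully, we write the core's contribution as $f$ of lengths bounded by the length of the heaviest admissible job. Jobs straddling into the successor part are accounted for by $p_c$.

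\textbf{Successor part.} This part has length $p_c$, and $J_c$ is runnable during $J_k$, so $w(J_c)\le\tau w_k$. By Karamata again, the jobs of $S$ finishing there contribute at most $f(p_c)\le\tau w_k$. The fact that these jobs are not in the range of a later job avoids double counting.

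\textbf{Combining.} The three contributions have to be merged so that they total $\tau w_k+\frac{\tau}{\tau-1}w_k=\frac{\tau^2}{\tau-1}w_k$ rather than $2\tau w_k+\frac{\tau}{\tau-1}w_k$. The plan is to merge the core and successor parts. Every job of $S$ meeting $[\sigma_k,\sigma_k+p_k+p_c]$ has length at most $p_c$: it could be scheduled during $J_k$, or it starts after $\sigma_k$ within the range. The whole window has length $p_k+p_c$, and the lengths there are majorized by the pair $(p_k,p_c)$-type split. Using convexity, $f(p_k)+f(p_c)$ must then be traded against $\tau w_k$ via $w(J_c)\le\tau w_k$ and $J_c$ not having interrupted.

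This is the main obstacle: making the Karamata majorization precise with the boundary-straddling jobs, and obtaining the strict inequality. I would first try to prove that the combined core-plus-successor weight is at most $\max(\tau w_k, w_k+w(J_c))$ restricted appropriately. Failing that, I would accept a slightly weaker constant and tighten it afterwards.
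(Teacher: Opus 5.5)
Your proposal has a genuine gap, and it is in the budget split itself, not only in the boundary bookkeeping you flag at the end. Since $\frac{\tau^2}{\tau-1}w_k=\tau w_k+w_k+\frac{1}{\tau-1}w_k$, and you spend $\frac{\tau}{\tau-1}w_k=w_k+\frac{1}{\tau-1}w_k$ on the predecessor part (a factor $\tau$ on every segment), you would need the core plus successor part to be at most $\tau w_k$. That is false.

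Take $w=p$, $\tau=2$, an empty chain and $J_k=(r,p,d,w)=(0,1,1,1)$. Release at time $\epsilon$ a job of length $1-2\epsilon$ with deadline $1-\epsilon$, and at time $1-\epsilon$ a job of length $2$ with deadline $3-\epsilon$. Neither interrupts $J_k$, both have expired when $J_k$ completes, and $p_c=2$. Running them back to back is a schedule inside $\range(J_k)=[0,3]$ of weight $3-2\epsilon\approx(1+\tau)w_k$. So no merging of core and successor can give $\tau w_k$. With your predecessor bound, the best you could conclude is $\frac{\tau}{\tau-1}w_k+(1+\tau)w_k$, which overshoots by exactly $w_k$.

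The region before $\sigma_k$ must therefore be charged only $\sum_j w(J_j')$, without the factor $\tau$, and per-job weight bounds cannot deliver that. In fact your per-segment claim already fails. Take a job nearly filling $[\sigma_j',\sigma_{j-1}')$, plus a job of weight $\tau w(J_j')$ released just before $\sigma_{j-1}'$ (which does not interrupt $J_j'$). Both start in segment $j$ and together weigh nearly $(1+\tau)w(J_j')$. That excess is paid for only by the length the long job takes away from later segments and the core, which a segment-local argument cannot see.

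The missing idea is the paper's single global majorization. Set aside the at most one job of the schedule crossing $\sigma_k+p_k$; jobs lying entirely after it are excluded by the hypothesis on later ranges. Sort the remaining lengths $x_1\ge x_2\ge\cdots$ and compare them with $(y_1,y_2,\dots)=(p_k,p(J_1'),\dots,p(J_h'),0,\dots)$. This sequence is non-increasing because weights grow by more than a factor $\tau$ along the chain.

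The key fact is that a job of the schedule longer than $p(J_u')$ cannot start before $\sigma_u'$ (write $J_0'=J_k$, $\sigma_0'=\sigma_k$). If it had been released while a chain job $J_j'$ with $j>u$ was running, not interrupting gives $w\le\tau w(J_j')<w(J_{j-1}')\le w(J_u')$. If it had been released before $\sigma_h'$, it would have been available, hence no heavier than $J_h'$, when $J_h'$ was selected.

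Now suppose $t$ is the first index with $\sum_{i\le t}x_i>\sum_{i\le t}y_i$. Then $x_1,\dots,x_t>y_t$, which forces these jobs into $[\sigma_{t-1}',\sigma_k+p_k)$, or into the whole window $[\sigma_h',\sigma_k+p_k)$ once $y_t$ is padding. That interval has length at most $y_1+\cdots+y_t$, a contradiction. One application of Karamata then bounds these jobs by $w_k+\sum_j w(J_j')<w_k+\frac{1}{\tau-1}w_k$. The crossing job costs at most $w(J_c)\le\tau w_k$, and that single job is the only place the factor $\tau$ is paid.
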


\begin{proof}
    Let \((\sigma_1', J_1'),\ldots, (\sigma_h', J_h')\) be the predecessor chain of a job \(J_k\), sorted from latest to earliest scheduled (in other words \(J_h'\) is the earliest job scheduled and \(J_1'\) is the latest) and let \(J_c\) be its successor.

    Let \(y_1, \ldots, y_\ell\) be the list \(p(J_1'),\ldots, p(J_h'), p(J_k)\) sorted in non-increasing order where \(p(J)\) is the processing time of job \(J\).

    Consider some schedule by \(OPT\) consisting of jobs, \(G_1,\ldots,G_{q}\), each scheduled at the times \(\sigma_1^*,\ldots, \sigma_q^*\) respectively. Suppose that \(\tau\)-\textsc{Persist} has not completed any of the jobs \(G_1,\ldots,G_q\) and that the schedule constructed with these jobs is contained in \(\range(J_k)\). Without loss of generality, we assume that \(OPT\) schedules one job that intersects \(J_k\) on the right, let \(G_q\) be that job. Let \(x_1, \ldots, x_\ell\) be the list \(p(G_1),\ldots, p(G_{q-1})\) sorted in non-increasing order. We pad the shorter of the lists \(y_1\ldots, y_\ell\) and \(x_1 \ldots, x_\ell\) with \(0\) to ensure they are the same length.

    We will show that for every \(0<j\leq \ell\) we have that \(\sum^{j}_{i=1} x_i\leq \sum^{j}_{i=1} y_i\).

    Suppose for the sake of obtaining a contradiction, that this is not the case, and that there is some \(j\) such that \(\sum^{j}_{i=1} x_i> \sum^{j}_{i=1} y_i\). Let \(t\) be the smallest such \(t\). Then consider \(y_t\), this length is no larger than \(y_j\) for \(j \leq t-1\) other jobs. 
    
    Suppose that \(y_t = p(J_u')\) for some job \(J_u'\) either in the predecessor chain of \(J_t\) or \(J_u' = J_k\). Since \(t\) is minimal, \(x_t\) must be greater than \(y_t\), otherwise the sum of the \(x_i\) would not overtake the sum of the \(y_i\). Then \(OPT\) scheduled some job \(G_r\) with value greater than \(p(J_u')\). Such a job would need to be scheduled after \(J_u'\) was scheduled by \(\tau\)-\(\textsc{Persist}\), otherwise \(J_u'\) would not be scheduled over that job. But note that any job that is larger than it would also need to be scheduled after \(J_u'\), which means that every processing time in \(x_1 \ldots x_t\) must fit in the interval \([\sigma_u', \sigma_k + p_k]\). Note that the values of \(y_i\), \(j \leq t\) are the lengths of jobs which have not been completed by \(\tau\)-\(\textsc{Persist}\) and thus the total length of \([\sigma_u', \sigma_k + p_k]\) is smaller than the sum of all \(y_i\), \(j \leq t\). This is a contradiction.

    If \(y_t \neq p(J_{u}')\) for any \(J_u'\) then \(y_t\) is a \(0\) added for padding. Then this means that all lengths are accounted for in the sum \(\sum^{t}_{i=1} y_i\). Since all the \(x_i\) must fit within \([\sigma_h', \sigma_k+p_k)\), their sum must be smaller than the sum of all lengths in the predecessor chain along with \(J_k\).

    % \textbf{Base Case:} The value of \(y_1\) is the largest processing time job processed by the algorithm or the successor of \(J_k\). This must be \(J_k\) since if it were in the predecessor chain it would be not be interrupted. The highest processing time job scheduled by \(OPT\) cannot have processing time higher than this without intersecting with \(J_k\) on the right, otherwise \(J_k\) would not be scheduled.

    % \textbf{Induction Case:} Suppose that for every \(1<j\leq t-1\) we have that \(\sum^{j}_{i=1} x_i\leq \sum^{j}_{i=1} y_i\).

    This means that by Karamata's inequality \(w(G_1) + \cdots + w(G_{q-1}) =\sum^\ell_{i=0} f(x_i) \leq \sum^\ell_{i=0} f(y_i) = w(J_1') + \cdots + w(J_h') + w(J_k)\). Furthermore, since the successor of \(J_k\) is by definition the largest job that can be scheduled intersecting \(J_k\) on the right, \(w(J_c) \geq w(G_q)\).

    We also know that since a job is only interrupted when it is \(\tau\) times bigger than the previous job, we can conclude that the \(i^\text{th}\) job in the predecessor chain is has weight at most \(\tau^{-i}\cdot w_k\). Furthermore, the successor of \(J_k\) has weight at most \(\tau \cdot w_k\). Putting this all together we have:

    \[\sum_{i=1}^q w(G_i) \leq \sum_{i=1}^h [w(J_i')] + w(J_k) + w(J_c) \leq \sum^\infty_{i=1} \tau^{2-i}w_k = \frac{\tau^2}{\tau - 1} \cdot w_k\]

\end{proof}

\begin{corollary} \label{cor:C-benv-tau-persist}
    \(\tau\)-\textsc{Persist} is \(\frac{\tau - 1}{\tau^2 + \tau - 1}\)-competitive against instances that are C-benevolent. In particular, this function is maximized at \(\tau = 2\) giving a \(\frac{1}{5}\)-competitive algorithm.
\end{corollary}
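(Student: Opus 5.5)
The plan is to combine Observation~\ref{obs:ranges} with Lemma~\ref{lem:prop-charging} through a charging argument.

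Fix an optimal schedule $OPT$ and split its jobs into two groups. The first group consists of jobs that $\tau$-\textsc{Persist} also completes. Their total weight is at most $ALG$, since each such job appears in the algorithm's completed set. The second group consists of jobs that $\tau$-\textsc{Persist} never completes. For each such job $G$, scheduled by $OPT$ at time $\sigma^*$, Observation~\ref{obs:ranges} gives a completed job whose range contains $[\sigma^*, \sigma^*+p(G))$. Charge $G$ to the \emph{latest} completed job $J_k$ whose range contains it. This choice is what makes the hypothesis of Lemma~\ref{lem:prop-charging} hold: the $OPT$ jobs charged to $J_k$ form a feasible schedule inside $\range(J_k)$, use only jobs not completed by the algorithm, and are not contained in the range of any later completed job. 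Lemma~\ref{lem:prop-charging} then bounds the total weight charged to $J_k$ by $\frac{\tau^2}{\tau-1} w_k$.

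Summing over all completed $J_k$ gives
\[OPT \le ALG + \frac{\tau^2}{\tau-1}\,ALG = \frac{\tau^2+\tau-1}{\tau-1}\,ALG,\]
which is the claimed ratio $\frac{\tau-1}{\tau^2+\tau-1}$.

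For the optimization, set $g(\tau)=\frac{\tau-1}{\tau^2+\tau-1}$ and differentiate. The numerator of $g'$ is $(\tau^2+\tau-1)-(\tau-1)(2\tau+1) = -\tau^2+2\tau$, which on $\tau>1$ vanishes only at $\tau=2$. It is positive for $\tau<2$ and negative for $\tau>2$, so $\tau=2$ is the maximizer, and $g(2)=\frac{1}{5}$.

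The main obstacle is checking that the ``latest containing range'' assignment really satisfies the lemma's hypotheses, in particular that each charged job lies in exactly one designated range. Completed jobs have ranges that may overlap, so I will fix a total order on completed jobs by completion time and charge to the maximal one. If Lemma~\ref{lem:prop-charging} is interpreted as bounding every such sub-schedule, this suffices.

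I also need to make sure that $OPT$ jobs which the algorithm completed at a \emph{different} time are counted only once. They belong to the first group and are bounded by $ALG$ directly, so they never enter the charging, and no double counting arises.
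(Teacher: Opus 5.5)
Your proposal is correct and follows the paper's own argument. You discard the \(OPT\) jobs the algorithm completes (losing at most \(ALG\)), cover the remaining ones by ranges via Observation~\ref{obs:ranges}, and bound each range's contribution by \(\frac{\tau^2}{\tau-1}w_k\) using Lemma~\ref{lem:prop-charging}, giving \(OPT \le \frac{\tau^2+\tau-1}{\tau-1}\,ALG\). Your explicit ``latest containing range'' assignment and the derivative check \(g'(\tau)\propto\tau(2-\tau)\) make precise two steps that the paper leaves implicit.
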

\begin{proof}
    Suppose for the sake of obtaining a contradiction, that there exists some input sequence \(J_1,\ldots, J_n\) such that our algorithm gets the total value \(v\), and there exists a schedule that obtains a value \(v' > \frac{\tau^2+\tau -1}{\tau-1}\cdot v\). Consider this schedule with all the jobs completed by our algorithm removed. By observation \ref{obs:ranges}, all jobs in this schedule have to be scheduled in the range of some job completed by the algorithm. By lemma \ref{lem:prop-charging}, this means that the total value of this schedule \(v'' \leq \frac{\tau^2}{\tau-1}\cdot v\). Since the most value that could have been lost by removing all the jobs completed by the algorithm is \(v\), \(v' \leq v'' + v \leq (\frac{\tau^2}{\tau-1} +1)\cdot v=\frac{\tau^2+\tau -1}{\tau-1}\cdot v\) which is a contradiction.
\end{proof}

% \begin{theorem} \label{thm:1/5-comp-c-benevolent}
%     There exists a \(\frac{1}{5}\)-competitive algorithm for real-time throughput maximization using restarting against C-benevolent input sequences and in particular for proportional weights.
% \end{theorem}
% \begin{proof}
%     For \(\tau > 1\) the maximum of \(\frac{\tau - 1}{\tau^2 + \tau - 1}\) is \(\frac{1}{5}\) which is achieved at \(\tau = 2\). This means that the algorithm \(2\)-\textsc{Persist} is \(\frac{1}{5}\) competitive as desired.
% \end{proof}

\begin{remark} % TODO rewrite this and talk more about the intuition that J_k cannot be scheduled outside the range for interval selection.
    In the case of the real-time interval selection problem, where slack is 0, we note that schedules for the same input sequence only differ in their choice of which jobs to schedule and not by the times at which they are scheduled. With interval selection, if a job \(J_k\) is scheduled by both \(\tau\)-\textsc{Persist} and \(OPT\), then \(OPT\) schedules it at the same time as \(\tau\)-\textsc{Persist} in \(\range(J_k)\). As such applying lemma \ref{lem:prop-charging} tells us that \(\tau\)-\textsc{Persist} is \(\frac{\tau - 1}{\tau^2}\)-competitive against such input sequences. This matches the fact that as previously noted, Woeginger's algorithm \cite{Woeginger94} is \(\frac{1}{4}\)-competitive and acts identically to \(2\)-\textsc{Persist}. We can therefore think of \cref{lem:prop-charging} as an alternative proof of Woeginger's algorithm's correctness.
\end{remark}

\section{D-Benevolent Throughput with Restarting}
    \label{sec:D-Benevolent}
    We now show that \(\tau\)-\textsc{Persist} also performs well against instances that are D-benevolent.

A function \(f\colon \reals_{\geq 0} \to \reals\) is D-Benevolent \cite{Woeginger94} if it satisfies the following properties:
\begin{enumerate}
    \item \(f(0) = 0\) and \(x > 0 \implies f(x) > 0\).
    \item \(0< x_1 < x_2 \implies f(x_1) \geq f(x_2)\). 
\end{enumerate}

We begin by proving the analogue to lemma \ref{lem:prop-charging}.

\medskip
\begin{lemma}\label{lem:d-benevolent-charging}
    Suppose that we run \(\tau\)-\textsc{Persist} against a D-benevolent input sequence. Any schedule that is contained in \(\range(J_k)\) and consists only of jobs that are neither contained in the range of a job scheduled later or completed by \(\tau\)-\textsc{Persist} must have a total weight of less than \(\frac{\tau^2}{\tau - 1}\cdot w_k\).
\end{lemma}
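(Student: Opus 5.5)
The plan mirrors the proof of Lemma~\ref{lem:prop-charging}, but Karamata's inequality is replaced by a direct counting argument, since for a D-benevolent $f$ short jobs are the valuable ones. Let $(\sigma_1',J_1'),\ldots,(\sigma_h',J_h')$ be the predecessor chain of $J_k$, ordered from latest to earliest, and let $J_c$ be its successor. Fix an adversarial schedule $G_1,\ldots,G_q$ contained in $\range(J_k)$ and consisting of jobs not completed by $\tau$-\textsc{Persist}. As before, split off the (at most one) job $G_q$ that crosses $\sigma_k+p_k$ on the right, so that $G_1,\ldots,G_{q-1}$ lie in $[\sigma_h',\sigma_k+p_k)$.

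\textbf{Step 1 (the job crossing the right end).} Since $G_q$ is schedulable during the processing of $J_k$, we bound $w(G_q)$ by the successor's weight, using the interrupt rule. A job of weight exceeding $\tau w_k$ that was released while $J_k$ ran would have interrupted $J_k$, contradicting the fact that $J_k$ was completed. A job available before $\sigma_k$ with larger weight would have been chosen instead of $J_k$, or would have triggered an interrupt, since restarting re-queues jobs. Hence $w(G_q)\le \tau w_k$.

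\textbf{Step 2 (jobs inside $[\sigma_h',\sigma_k+p_k)$).} Charge each $G_r$ with $r<q$ to the chain job $J'$ (or $J_k$) that was running at time $\sigma_r^*$, or at the first point of $G_r$ covered by the chain. The key claim is that each chain job $J'_u$ receives at most one such $G_r$, and that this $G_r$ has weight at most $w(J'_u)$. Suppose $w(G_r)>w(J'_u)$, or suppose two adversary jobs fit inside the processing window of $J'_u$ (so one of them has length smaller than $p(J'_u)$, and hence weight at least $w(J'_u)$ by monotonicity). In either case we use the second clause of the interrupt condition: a job that finishes earlier than the current job and has weight at least as large causes an interrupt. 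That job would have been released while $J'_u$ ran and would have interrupted it, or it would have been selected in preference to $J'_u$ when $J'_u$ started. In both cases it would then appear in the chain, or have been completed, or lie in the range of a later job, each of which contradicts the hypotheses. This is exactly the situation the footnote about D-benevolent instances anticipates. The claim gives $\sum_{r<q} w(G_r)\le w_k+\sum_{i=1}^h w(J_i')$.

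\textbf{Step 3 (summing).} Since each interrupt multiplies weight by more than $\tau$ (or involves a strictly shorter-finishing job of weight at least as large; chains of such interrupts need a separate small argument), we have $w(J_i')\le \tau^{-i}w_k$. The total is then at most $\tau w_k+\sum_{i\ge 0}\tau^{-i}w_k=\frac{\tau^2}{\tau-1}w_k$.

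\textbf{Main obstacle.} The main difficulty is Step 2 together with the geometric decay. Interrupts via the ``finishes earlier with weight $\ge$'' clause do not increase weight by a factor of $\tau$, so the chain weights need not decay geometrically. My first attempt would be to collapse consecutive equal-weight-class interrupts: such an interrupt strictly shrinks the current completion time, so I would charge the entire run of them to a single geometric term. If that fails, I would account for each such predecessor as ``absorbing'' the adversary job it was replaced by, so that it contributes nothing extra to the sum.
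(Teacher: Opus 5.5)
Your bookkeeping is essentially the paper's. There, every adversary job is shown to cover one of the points $\sigma_k+p_k,\sigma_k,\sigma'_1,\dots,\sigma'_h$, since nothing fits strictly inside a chain window by the second interrupt clause. Each point carries one job, and the weights are summed as $\sum_{j\ge1}\tau^{2-j}w_k$. Your job ``starting in $J'_u$'s window'' is the job covering $\sigma'_{u-1}$.

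\textbf{Gap 1: the per-window bound is false.} You claim $w(G_r)\le w(J'_u)$, but the clause-2 argument does not give this. A job that starts in $J'_u$'s window and runs past $\sigma'_{u-1}$ need not finish before $\sigma'_u+p(J'_u)$. Take $\tau=2$ and zero slack throughout:
\begin{itemize}
\item $J'_2$ runs from $0$ with $p=10$, $w=1$;
\item $G$ is released at $5$ with $p=6$, $w=1.5$, and triggers neither clause ($1.5\le2$ and $5+6\ge10$);
\item $J'_1$ arrives at $6$ with $p=5.5$, $w=2.5$, and $J_k$ arrives at $7$ with $p=5$, $w=5.5$, each interrupting by the first clause; $J_k$ completes at $12$.
\end{itemize}
Weights are non-increasing in length. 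The adversary can run $G$ on $[5,11)$: it starts in $J'_2$'s window, ends before $\sigma_k+p_k$, and $w(G)=1.5>w(J'_2)$.

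What is true is this: a job released while $J'_v$ runs, and not interrupting it, has weight at most $\tau w(J'_v)$ and ends no earlier than $\sigma'_v+p(J'_v)$. So the correct charge to window $u$ is $\tau w(J'_u)$. Your total survives only because $J_k$'s own window receives nothing beyond the single job reaching $\sigma_k+p_k$. This index shift is exactly why the paper puts $\tau^{2-j}w_k$ on the point $a_j$. Also, ``it would have been selected in preference to $J'_u$'' is not an available mechanism for $u<h$: $J'_u$ starts by interrupting, with no comparison against queued jobs.

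\textbf{Gap 2: geometric decay under clause-2 interrupts.} You flag this yourself: clause-2 interrupts break $w(J'_i)\le\tau^{-i}w_k$, and the proposal stops there, so the lemma is not proved. (The paper's proof does not settle this either; it simply asserts that every interruption is by a factor $\tau$.) Your ``collapse the run'' idea does work once combined with the fact above.
\begin{itemize}
\item An adversary job $G$ starting in $J'_u$'s window ends no earlier than $F(J'_u)=\sigma'_u+p(J'_u)$. Say $G$ was released during an earlier $J'_v$. If some clause-1 step lies between $J'_v$ and $J'_u$, then $w(G)\le\tau w(J'_v)<w(J'_u)$ forces $p(G)>p(J'_u)$. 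Otherwise $F(J'_v)\ge F(J'_u)$.
\item Clause-2 steps strictly decrease the planned completion time. So such a job covers every start point up to the end of the maximal clause-2 run containing $J'_u$, and reaches past $\sigma_k+p_k$ if that run ends at $J_k$.
\item Hence each run hosts at most one job, of weight at most $\tau$ times the run's last (heaviest) member. These maxima fall by a factor $\tau$ from run to run, giving $\tau w_k+\sum_{t\ge1}\tau^{1-t}w_k=\frac{\tau^2}{\tau-1}w_k$.
\end{itemize}

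\textbf{Two further loose ends.} First, jobs lying entirely after $\sigma_k+p_k$ must be excluded via the ``range of a later job'' hypothesis (the paper's $u=0$ case). Your split into $G_q$ and $G_1,\dots,G_{q-1}$ silently assumes this. Second, jobs already queued when the earliest chain job (possibly $J_k$ itself) is started need a tie-breaking rule when $f$ is not strictly decreasing. With $f\equiv1$, suppose the algorithm picks a long job over many unit-length jobs with staggered deadlines. Those all expire inside its range, and the bound fails.
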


\begin{proof}
    Let \((\sigma_1', J_1'),\ldots, (\sigma_h', J_h')\) be the predecessor chain of a job \(J_k\), sorted from latest to earliest scheduled (in other words \(J_h'\) is the earliest job scheduled and \(J_1'\) is the latest) and let \(J_c\) be its successor. Consider the set of points \(a_{0} = \sigma_k+p_k+p_c, a_{1} = \sigma_k+p_k, a_2 = \sigma_k, a_{3} =\sigma_1',\ldots, a_{h+2} = \sigma_h' \).

    By definition of the algorithm, a job is only interrupted when the job that replaces it has \(\tau\) times its weight. Therefore, \(w_\ell' \leq \tau\cdot w_{\ell+1}' \leq \tau^j \cdot w_{\ell+j}\). This implies that if a job covers the point \(a_j\), its weight is at most \(\tau^{2-j}w_k\) for any \(1 \leq j \leq h\) unless the job was previously completed by the algorithm. For \(j \geq 2\), this follows from the fact that \(a_j\) is the start time of job \(J_{j+2}'\) and thus if a job with weight higher that \(\tau\cdot w_{j+2}\leq \tau^{2-j}w_k\) existed at that time, then it would be scheduled instead. For \(j = 1\) it follows from the fact that \(J_k\) was not interrupted, and therefore no job with weight \(\geq \tau\cdot w_k\) could be scheduled during its execution.

    Consider some schedule \((\sigma_0^*, G_0),\ldots, (\sigma_q^*, G_q)\) consisting of jobs \(G_0,\ldots,G_q\) that the algorithm has not scheduled that is contained in \(\range(J_k)\) and such that no \([\sigma_i^*, \sigma_i^* + p(G_i))\) is contained in \(\range(J_r)\) for some \(J_r\) scheduled after \(J_k\) by \(\tau\)-\textsc{Persist}.

    We claim that for every \(0\leq i\leq q\) \([\sigma_i^*, \sigma_i^* + p(G_i))\) overlaps with \(a_j\) for some \(j\). Suppose that some \(G_\ell\) does not overlap with any \(a_j\). Then \([\sigma_\ell^*,\sigma_\ell^* + p(G_\ell)) \subseteq (a_{u+1}, a_{u})\) for some \(0 \leq u < q\). If \(u > 0\) then by line \ref{alg_line:tau-persist-interrupt-condition} of algorithm \ref{alg:tau-persist}, \(G_\ell\) would interrupt the current job since it would be completed sooner and would have a larger weight since the input sequence is D-benevolent. This means that \(G_\ell\) is in the predecessor chain of \(J_k\). Since the input sequence is D-benevolent, jobs that appear later in the predecessor chain have larger weights and have larger weights. Since \(G_\ell\) has processing time smaller than the job whose starting point is at \(a_{u+1}\), after all, it can complete before this job is interrupted), it must be released before said job which is a contradiction. If \(u = 0\), then \(G_\ell\) is scheduled after \(J_k\) is completed. This means that the set of jobs some job that \(\tau\)-\textsc{Persist} has not completed could be scheduled as soon as \(J_k\) completed and thus \([\sigma_\ell^*, \sigma_\ell^* + p(G_\ell))\) is contained in the range of a later job.

    This means that every job in the schedule overlaps with at least one point in \(a_0, \ldots, a_{h+2}\). This means that the total weight of this schedule is less than \(\sum^\infty_{i=1} \tau^{2-i}w_k = \frac{\tau^2}{\tau - 1} \cdot w_k\), as desired.
\end{proof}

The next result follows exactly from the same argument as \cref{cor:C-benv-tau-persist}.

\medskip
\begin{restatable}{theorem}{dbenevolentrestarting}\label{thm:d-benevolent-restarting}
    \(\tau\)-\textsc{Persist} is \(\frac{\tau - 1}{\tau^2 + \tau - 1}\)-competitive against instances that are D-benevolent. In particular, there exists a \(\frac{1}{5}\)-competitive algorithm for real-time throughput maximization using restarting against D-benevolent input sequences.
\end{restatable}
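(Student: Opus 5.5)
The plan is to repeat the argument of \cref{cor:C-benv-tau-persist}, with \cref{lem:d-benevolent-charging} taking the place of \cref{lem:prop-charging}. Fix an arbitrary D-benevolent input sequence. Let \(v\) be the total weight completed by \(\tau\)-\textsc{Persist} and let \(v'\) be the weight of an optimal offline schedule. Since an optimal schedule under restarting never needs to restart, we may treat it as a set of disjoint execution intervals \([\sigma^*, \sigma^*+p)\), each lying inside the window of its job. We split the optimal schedule into two parts. The first consists of the jobs that \(\tau\)-\textsc{Persist} also completes; their weight is at most \(v\), because each such job contributes its weight at most once to both sums. The second consists of the jobs that the algorithm never completes.

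For the second part we use \cref{obs:ranges}. Every job \(G\) in it is scheduled by the optimum at some feasible time \(\sigma^*\), so \([\sigma^*, \sigma^*+p(G))\) lies in the range of some completed job. Assign each such \(G\) to the \emph{latest} completed job \(J_k\) whose range contains its interval. With this choice, the jobs assigned to \(J_k\) form a schedule contained in \(\range(J_k)\), none of them was completed by the algorithm, and none is contained in the range of a job scheduled later. These are exactly the hypotheses of \cref{lem:d-benevolent-charging}, so the jobs assigned to \(J_k\) have total weight less than \(\frac{\tau^2}{\tau-1}w_k\). Summing over all completed \(J_k\) bounds the second part by \(\frac{\tau^2}{\tau-1}v\).

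Adding the two parts gives
\[
v' \leq \Bigl(\frac{\tau^2}{\tau-1}+1\Bigr)v = \frac{\tau^2+\tau-1}{\tau-1}\,v,
\]
which is the claimed competitive ratio. For the second sentence of the theorem, we maximize \(g(\tau)=\frac{\tau-1}{\tau^2+\tau-1}\) over \(\tau>1\). Its derivative has numerator \((\tau^2+\tau-1)-(\tau-1)(2\tau+1) = -\tau^2+2\tau\), which vanishes at \(\tau=2\). This gives \(g(2)=1/5\), so \(2\)-\textsc{Persist} is \(\frac15\)-competitive.

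The main obstacle is making the assignment step rigorous. We need to check that ``latest range containing the interval'' is well defined. We also need to check that choosing the latest range really removes containment in the range of any later-scheduled job, which requires ranges of later jobs to be ordered compatibly with scheduling order. I would handle this by ordering completed jobs by completion time. Containment in the range of a job completed later would contradict maximality of the choice. So the hypothesis of \cref{lem:d-benevolent-charging} holds, provided ``scheduled later'' in that lemma is read as ``completed later'', which is how its proof uses it (the case \(u=0\)).
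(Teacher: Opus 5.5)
Your proposal is correct and is essentially the paper's proof. The paper just says the theorem follows by the same argument as \cref{cor:C-benv-tau-persist}: split off the optimal jobs the algorithm also completes (weight at most $v$), cover the remaining ones by ranges via \cref{obs:ranges}, and bound each range's share by \cref{lem:d-benevolent-charging}, exactly as you do. Your explicit assignment of each remaining job to the latest completed job whose range contains it makes rigorous the partition step the paper leaves implicit, and it is sound because on one machine completion order coincides with final start order.
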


We also can extend this result to other similar instances where we can ensure that each job intersects with an endpoint of a job that \(\tau\)-\textsc{Persists} schedules.
\medskip
\begin{theorem}
    \(\tau\)-\textsc{Persist} is \(\frac{\tau - 1}{\tau^2 + \tau - 1}\)-competitive against instances where all lengths are the same and the weights are arbitrary. In particular, there exists a \(\frac{1}{5}\)-competitive algorithm for real-time throughput maximization using restarting against such instances.
\end{theorem}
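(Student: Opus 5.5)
We follow the same two-step template as for the C- and D-benevolent cases: first a charging lemma bounding the weight of any schedule of uncompleted jobs inside $\range(J_k)$ by $\frac{\tau^2}{\tau-1}\cdot w_k$, then the argument of \cref{cor:C-benv-tau-persist} verbatim. That argument uses \Cref{obs:ranges}, which holds for any instance, to place every uncompleted OPT job inside the range of some completed job. It then adds back at most $v$ for the jobs that both OPT and the algorithm complete, which gives the ratio $\frac{\tau-1}{\tau^2+\tau-1}$. Setting $\tau=2$ yields $\frac15$. So all the work is in the charging lemma, and for that we mimic the D-benevolent proof, in which every OPT job is made to cover one of the points $a_0=\sigma_k+p_k+p_c,\ a_1=\sigma_k+p_k,\ a_2=\sigma_k,\ a_3=\sigma_1',\dots,a_{h+2}=\sigma_h'$.

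\textbf{Step 1: weight bound at each point.} When all processing times equal $p$, the interrupt condition ``$T+p_k<\sigma_c+p_c$ and $w(J_k)\ge w(J_c)$'' fires exactly when a job of weight at least $w(J_c)$ arrives after $\sigma_c$. Hence consecutive jobs in the predecessor chain have nondecreasing weights, and in any case every interruption is by a job of at least $w(J_c)$. To get geometric decay, as in the D-benevolent lemma, we argue about the point $a_j$ rather than the chain. Any uncompleted job that is available while the algorithm runs $J'$ and could be scheduled over the start of $J'$ must have weight below $\tau\, w(J')$, for otherwise it would have interrupted $J'$. Following the same derivation as in \cref{lem:d-benevolent-charging}, a job covering $a_j$ then has weight at most $\tau^{2-j}w_k$. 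For $a_0$ and $a_1$ we use that $J_c$ is the successor and that $J_k$ was never interrupted.

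\textbf{Step 2: every OPT job covers some $a_j$.} Because all lengths are $p$ and consecutive points of the chain satisfy $a_{u}-a_{u+1}<p$ for $u\ge 2$ (each chain job was interrupted before finishing), no length-$p$ job fits strictly inside $(a_{u+1},a_u)$ for $u\ge1$. The one segment of length exactly $p$, namely $[\sigma_k,\sigma_k+p)$, is closed at its endpoints, so a job lying in it contains $a_2$. The gap $(a_1,a_0)$ also has length $p$ since $p_c=p$, so a job lying strictly inside it is impossible. Any OPT job started at or after $a_1$ without covering $a_0$ is therefore started exactly at $\sigma_k+p_k$; such a job was available when $J_k$ finished, and we charge it to the range of a later job, exactly as in the $u=0$ case of the D-benevolent proof. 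Since the schedule is disjoint, each point is covered by at most one OPT job, and summing the bounds gives $\sum_{i\ge1}\tau^{2-i}w_k=\frac{\tau^2}{\tau-1}w_k$.

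\textbf{Main obstacle.} The delicate part is Step 1, specifically showing that an OPT job covering $a_j$ for $j\ge3$ has weight at most $\tau^{2-j}w_k$. The chain weights only decay by a factor of $\tau$ at interruptions triggered by the $\tau$-rule, not at those triggered by the ``earlier completion'' rule. My first attempt is to observe that with equal lengths the earlier-completion rule is equivalent to ``a later-released job of weight at least $w(J_c)$.'' If that rule instead produces a flat chain, the chain jobs are nested within distance $<p$, so one OPT job covers several of the corresponding points, and the per-point geometric accounting can be regrouped over maximal blocks of equal-rule interruptions. If that fails, I would fall back to merging consecutive points closer than $p$ into a single charge and bounding each block by the weight at its right end.
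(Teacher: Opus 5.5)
Your skeleton matches the paper's. Its proof of this theorem just reruns \cref{lem:d-benevolent-charging}, noting that with a single length every OPT job meets one of $a_0,\dots,a_{h+2}$, and then applies the accounting of \cref{cor:C-benv-tau-persist}; your Step 2 fills in that covering claim correctly. The gap is Step 1, which you leave open because you misread the interrupt rule. With $p_k=p_c=p$, the clause $T+p_k<\sigma_c+p_c$ says $T<\sigma_c$. That can never hold, because $J_c$ is already running at the announcement time $T$, so $\sigma_c\le T$; a later-released job of the same length finishes \emph{later}, not earlier. So your claim that this clause ``fires exactly when a job of weight at least $w(J_c)$ arrives after $\sigma_c$'' is false. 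On equal-length instances the clause is vacuous, every link of a predecessor chain is a $\tau$-interruption, and $w(J'_i)<\tau^{-i}w_k$. That is exactly the geometric decay the D-benevolent derivation needs. Your ``main obstacle'' does not exist, but as written the proposal never establishes the bound it relies on.

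Your fallbacks could not have worked had flat links been possible. Only \emph{consecutive} chain starts are within distance $p$, so a flat chain is not ``nested within distance $<p$''. Under your reading, unit-weight zero-slack jobs released every $p-\epsilon$ time units would each preempt the previous one. The algorithm would complete a single job while OPT completes roughly half of them, so no regrouping of the points $a_j$ yields a constant. The argument genuinely rests on the vacuity of the second clause.

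With that in hand, finish Step 1 by bounding a job $G$ covering $a_j$ through the chain job that was running when $G$ was \emph{released}, not the one starting at $a_j$; with zero slack, $G$ may no longer be startable at $a_j$.
\begin{itemize}
\item If $G$ is released before $a_j$ while $J'_m$ runs (with $J'_0:=J_k$), then $m\ge j-1$ and $G$ did not preempt $J'_m$. Hence $w(G)\le\tau\,w(J'_m)\le\tau^{2-j}w_k$.
\item If $G$ is released before $\sigma'_h$, it is still startable at $\sigma'_h\le\sigma^*_G$. Since $J'_h$ preempted nothing, it was chosen as a maximum-weight available job, so $w(G)\le w(J'_h)$.
\end{itemize}
Together with your Step 2 and the corollary's accounting, this gives the stated ratio.
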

\begin{proof}
    The proof is the same as the proof of \cref{lem:d-benevolent-charging}. We note that since there is only one length for jobs, they must all intersect with a point in \(a_0,\ldots,a_{h+2}\).
\end{proof}

\section{Advance Notice}
    \label{sec:advance-notice}
    %\subsection{What is Advance Notice?}
So far we have looked at algorithms that use preemption (either revoking or restarting) to achieve a constant competitive ratio. In this section, we explore if one can achieve comparable results without preemption if the algorithm knows about jobs in advance.

For this section, we define a job \(J_i = (a_i, r_i, p_i,d_i,w_i)\) as a \(5\)-tuple in \(\reals_{\geq 0}^5\). Here \(p_i\), \(d_i\), and \(w_i\) are unchanged from the standard throughput model. But now, rather than the algorithm being made aware of the job at time \(r_i\), it is made aware of the job at time $a_i \leq r_i$. Even so, the job cannot be scheduled at any time prior to \(r_i\). This allows the algorithm to potentially make decisions about which jobs to schedule with some knowledge about potential future jobs which have not been released yet. 

%In general, the inclusion of an announcement time does not necessarily help us against an adversary relative to a model where jobs are announced at their release times. Since announcing a job \(J_i\) such that \(a_i < r_i\) does not impact the value of the optimal offline solution (which knows about every job from the very beginning), an adversary designing the sequence of jobs $\mathcal{J}$ can always make an algorithm perform worse by making $a_i = r_i$ without impacting the optimal solution. 

\medskip
\begin{definition}\label{t-ad}
    A job \(J\) has \(t\)-advance-notice for some constant \(t\in \mathbb{R}_{\geq 0}\) if \((r_i - a_i) \geq t\cdot p_i\). An input sequence has \(t\)-advance-notice if all jobs in it have \(t\)-advance-notice.
\end{definition}
Since jobs are designed by an adversary and giving an earlier $a_i$ only benefits an online algorithm without changing the optimum, we can assume that $(r_i - a_i) = t\cdot p_i$.

\medskip
\begin{remark}
    If an input sequence has \(t\)-advance-notice it also has \(t'\)-advance-notice for all \(t' \leq t\).
\end{remark}

\subsection{Algorithms that use Advance Notice:} \label{subsec:advance-notice-alg}

We show that given any \(t > 0\), there exists an algorithm which achieves a constant competitive ratio against input sequences with \(t\)-advance-notice and proportional weights without using preemption. 

We begin by extending the algorithm \(\tau\)-\textsc{Persist} as defined in section \ref{sec:proportional weights} to take advantage of earlier announcement times. This algorithm still uses restarting if insufficient advance-notice is given. We will later prove that on input sequences with \(\frac{1}{\tau}\)-advance-notice, the algorithm never uses preemption.
Our augmented algorithm, which we will call \(\tau\)-\(\textsc{Persist}^*\) works as follows, we simply run \(\tau\)-\textsc{Persist} on the set of known jobs and simulate the algorithm into the future, as if no new jobs get announced. When a job would normally be scheduled by \(\tau\)-\textsc{Persist}, but the algorithm knows (from simulating \(\tau\)-\textsc{Persist} into the future) that it will be preempted by a job yet to be released, the algorithm does not schedule the job, although it still interrupts the currently running job if there is one.

\medskip
\begin{lemma}
    The schedule constructed by \(\tau\)-\(\textsc{Persist}^*\) and by \(\tau\)-\(\textsc{Persist}\) are the same if run on the same input sequence.
\end{lemma}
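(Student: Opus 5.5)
We interpret ``the schedule'' as the set of jobs completed, each with its completion interval $[\sigma_i,\sigma_i+p_i)$. The claim is that \(\tau\)-\(\textsc{Persist}^*\) and \(\tau\)-\(\textsc{Persist}\) complete exactly the same jobs at exactly the same times; they differ only in that \(\tau\)-\textsc{Persist} may start jobs that are later interrupted, while \(\tau\)-\(\textsc{Persist}^*\) leaves the machine idle instead. The plan is an induction over the sorted sequence of event times (announcements, releases, completions and interruptions). The invariant is that at every time $T$ the two algorithms share three things: the set of completed jobs, the set of pending (released, unexpired, uncompleted) jobs, and the \emph{virtual} current job. Here the virtual current job of \(\tau\)-\(\textsc{Persist}^*\) at $T$ is the job that its simulation of \(\tau\)-\textsc{Persist} says is running at $T$, with the same nominal start time $\sigma_c$.

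The key step is to show that the simulation at time $T$ is faithful up to the next event. \(\tau\)-\(\textsc{Persist}^*\) simulates \(\tau\)-\textsc{Persist} on all jobs announced so far. Since $a_i \le r_i$, this known set contains every job released by $T$, together with some jobs released later. \(\tau\)-\textsc{Persist} itself is causal: its decision at time $T$ depends only on jobs with $r_i \le T$ and on its own past. Hence, as long as no new announcement arrives, the simulated trajectory coincides with the real trajectory of \(\tau\)-\textsc{Persist}, including future interruptions triggered by already-announced jobs at their release times.

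When a new announcement arrives, we re-simulate. Because the invariant holds up to that moment, and the newly announced job does not affect anything before its release time, the new simulation agrees with the past. The induction therefore carries through.

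Given faithfulness, consider any time at which \(\tau\)-\textsc{Persist} starts a job $J$. There are two cases.
\begin{itemize}
\item[(i)] $J$ will be interrupted before it completes. The interrupting job has already been announced; otherwise \(\tau\)-\(\textsc{Persist}^*\) could not foresee it. In that case \(\tau\)-\(\textsc{Persist}^*\) idles, and it is ``virtually'' running $J$.
\item[(ii)] $J$ completes. We must show that \(\tau\)-\(\textsc{Persist}^*\) actually starts $J$ at the same time.
\end{itemize}

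The main obstacle is case (ii) when the interrupting job is announced only \emph{after} $J$ has been started by \(\tau\)-\(\textsc{Persist}^*\). Then \(\tau\)-\(\textsc{Persist}^*\) really is running $J$, and it interrupts it exactly as \(\tau\)-\textsc{Persist} does, since the algorithm explicitly still interrupts. This is consistent, because the interrupted job is returned to the pending set under restarting in both algorithms.

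The converse situation also needs care. \(\tau\)-\(\textsc{Persist}^*\) may decline to start $J$ because of a foreseen interruption, so we must check that the interruption really happens and that nothing later changes that. Adding announced jobs can only create additional interruptions; it cannot cancel an interruption caused by an already-known job whose release time is fixed. The reason is that the interrupt condition at line~\ref{alg_line:tau-persist-interrupt-condition} compares only the arriving job with the current job, and by the invariant the current job is the same in the real and simulated runs.

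One subtlety remains in how the paper's definition treats the state of the machine. \(\tau\)-\textsc{Persist}'s rule ``if $J_c=\emptyset$, start $J_k$'' must be read with the virtual current job, not with the physically idle machine. I will state this explicitly, and the argument holds under that reading.

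Finally, completions coincide. A job completes in \(\tau\)-\(\textsc{Persist}^*\) exactly when it is both physically and virtually running with no interruption before $\sigma+p$, which is exactly when \(\tau\)-\textsc{Persist} completes it. Hence the two schedules are equal.
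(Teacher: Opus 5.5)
Your argument is correct and follows the paper's proof. The core of both is that a preemption foreseen by the simulation really occurs in \(\tau\)-\textsc{Persist}, because the interrupt test at line~\ref{alg_line:tau-persist-interrupt-condition} depends only on the arriving job and the current job, so a job that \(\tau\)-\(\textsc{Persist}^*\) declines is never completed from that start. Your event-by-event invariant (completed set, pending set, virtual current job and its start time) makes explicit the state synchronization that the paper gets more loosely from its ``by definition'' remarks and its choice of the earliest completed job on which the two algorithms disagree.

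There is one labelling slip. Your ``main obstacle'' paragraph, about an interrupting job announced after \(J\) has started, is a sub-case of (i), not (ii), since in case (ii) \(J\) is never interrupted.
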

\begin{proof}
    We fix an input sequence \(\mathcal{J}\). 
    
    By definition, \(\tau\)-\(\textsc{Persist}^*\) only ever schedules a job if \(\tau\)-\(\textsc{Persist}\) does. It also interrupts the execution of a job whenever \(\tau\)-\(\textsc{Persist}\) does. As such it suffices to show that every job completed by \(\tau\)-\(\textsc{Persist}\) is completed by \(\tau\)-\(\textsc{Persist}^*\) at the same time. 
    
    Suppose for the sake of obtaining a contradiction that there are some jobs that are completed by \(\tau\)-\(\textsc{Persist}\) that are not completed by \(\tau\)-\(\textsc{Persist}^*\) at the same time. Let \(J_i\) the earliest such job that is completed by \(\tau\)-\(\textsc{Persist}\).

    Let \(\sigma_i\) be the time at which \(\tau\)-\(\textsc{Persist}\) started processing \(J_i\) when it was completed. Since \(\tau\)-\(\textsc{Persist}^*\) did not schedule this job, by definition, there exists some job \(J_j\) such that \(a_j \leq r_i\), \(w_j > \tau\cdot w_i\), and \(r_j \in (\sigma_i, \sigma_i + p_i)\).

    At time \(r_j\), suppose that \(J_i\) is being run by \(\tau\)-\(\textsc{Persist}\). Then by definition of the algorithm, it should interrupt \(J_i\) allow for \(J_j\) to be scheduled. Otherwise, \(J_i\) has since been interrupted by another job, we get a contradiction since \(J_i\) must be completed when scheduled at time \(\sigma_i\).
\end{proof}

By this lemma, we can define the \emph{predecessor chain}, \emph{successor}, and \emph{range} of a job \(J\) being scheduled by \(\tau\)-\(\textsc{Persist}^*\) to be the predecessor chain, successor, and range of this job if it were scheduled by \(\tau\)-\(\textsc{Persist}\). We also define the \emph{predecessor} of \(J\) to be the first element in the predecessor chain of \(J\). Note that the first element in the predecessor is the last job to be interrupted before \(J\) is scheduled since jobs are appended to the start of the predecessor chain.

\medskip
\begin{theorem}\label{thm:no_premption_needed}
    Suppose that some job \(J_k\), in an input sequence with proportional weights, is scheduled by \(\tau\)-\(\textsc{Persist}^*\) (but possibly not completed) and has \(\frac{1}{\tau}\)-advance-notice. Then the predecessor of \(J_k\) was not interrupted by \(\tau\)-\(\textsc{Persist}^*\).
\end{theorem}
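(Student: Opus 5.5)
We argue by contradiction. Let \(J_c\) be the predecessor of \(J_k\), i.e.\ the job that \(\tau\)-\textsc{Persist} (and hence \(\tau\)-\(\textsc{Persist}^*\), by the preceding lemma) was running when \(J_k\) was released at time \(T=r_k\) and which it interrupted in favour of \(J_k\). We want to show that \(\tau\)-\(\textsc{Persist}^*\) never actually started \(J_c\). In other words, at the moment \(J_c\) would have been started by \(\tau\)-\textsc{Persist}, the simulation already foresaw the interruption by \(J_k\).

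First, decide which branch of line \ref{alg_line:tau-persist-interrupt-condition} caused the interruption. With proportional weights \(w=p\), the branch ``\(T+p_k<\sigma_c+p_c\) and \(w_k\ge w_c\)'' is impossible. Indeed, since \(T\ge\sigma_c\), the condition \(T+p_k<\sigma_c+p_c\) gives \(p_k<p_c\), so \(w_k<w_c\). Hence the interruption was because \(w_k>\tau w_c\), that is, \(p_k>\tau p_c\). Now use the advance notice: \(a_k = r_k-\tfrac1\tau p_k < r_k - p_c\). Since \(J_c\) was running at time \(r_k\), its start time satisfies \(\sigma_c > r_k-p_c\), so \(\sigma_c\ge a_k\) up to the boundary case. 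For that case, I would note that \(r_k\) lies strictly inside \((\sigma_c,\sigma_c+p_c)\), which gives strict inequality. Thus \(J_k\) was already announced when \(\tau\)-\textsc{Persist} started \(J_c\) at time \(\sigma_c\).

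Second, show that the forward simulation at time \(\sigma_c\) predicts the interruption. At that time the algorithm knows \(J_k\), and \(r_k\in(\sigma_c,\sigma_c+p_c)\) with \(w_k>\tau w_c\). So in the simulation of \(\tau\)-\textsc{Persist} on the known jobs, either \(J_c\) is still running at \(r_k\) and is interrupted by \(J_k\), or it is interrupted even earlier by some other known job. In both cases the simulated \(\tau\)-\textsc{Persist} preempts \(J_c\) by a job yet to be released. By the definition of \(\tau\)-\(\textsc{Persist}^*\), it therefore does not start \(J_c\), so \(J_c\) is never running and cannot be interrupted.

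The main obstacle is the gap between the simulation (which uses only the jobs known at \(\sigma_c\)) and the real run. Jobs announced later could change the simulated future, for instance by replacing \(J_c\) before \(r_k\). However, any such change still results in \(J_c\) not being run to time \(r_k\), which is all we need: the predecessor of \(J_k\) is by definition the job actually interrupted at \(r_k\). I would handle this by arguing about the real \(\tau\)-\textsc{Persist} trajectory, via the equivalence lemma, together with the fact that the simulated interruption at or before \(r_k\) is witnessed by the already-known \(J_k\). I would also check the strictness of \(\sigma_c>a_k\) carefully, and the convention for ties when jobs are released exactly at an endpoint.
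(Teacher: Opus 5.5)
Your proposal is correct and follows the paper's proof. You use \(p_k>\tau p_c\) together with the \(\frac{1}{\tau}\)-advance-notice to get \(a_k<r_k-p_c<\sigma_c\), so \(J_k\) is already announced when \(J_c\) would be started, and by definition \(\tau\)-\(\textsc{Persist}^*\) then declines to start \(J_c\). Beyond that, you fill in details the paper leaves implicit: you show that under proportional weights the ``finishes earlier with at least as much weight'' branch cannot fire, and you check that the forward simulation at \(\sigma_c\) really does foresee the preemption.
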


\begin{proof}
    If \(J_k\) has no predecessor, then \(J_k\) did not interrupt a job before being scheduled. Otherwise, let\(J_\ell\) be the predecessor of \(J_k\) and let \(\sigma_\ell\) be the time at which \(J_\ell\) was scheduled by \(\tau\)-\(\textsc{Persist}\) before it was interrupted.

    Since \(J_k\) interrupted \(J_\ell\), \(r_k \in (\sigma_\ell,\sigma_\ell+p_\ell)\). Furthermore, since \(J_k\) has \(\frac{1}{\tau}\)-advance-notice, \((r_k - a_k) \geq \tau^{-1}\cdot p_k\). Since the input sequence has proportional weights, we know that \(p_\ell = w_\ell\) and \(p_k = w_k\). And finally, since \(J_k\) interrupted \(J_\ell\), \(p_\ell \cdot \tau < p_k\). Since \(a_k \leq r_k - \tau^{-1}\cdot p_k < \sigma_\ell + p_\ell - \tau^{-1}\cdot p_k <\sigma_\ell + p_\ell - p_\ell =\sigma_\ell\) we get \(a_k< \sigma_\ell\).

    Therefore, since \(a_k < \sigma_\ell\), \(J_k\) was announced before \(J_\ell\) was scheduled, and thus by definition, \(\tau\)-\(\textsc{Persist}^*\) does not schedule \(J_\ell\).
\end{proof}

\begin{corollary}
    For any \(0<t \leq \frac{1}{2}\), there is an algorithm that achieves \(\frac{t-t^2}{1+t-t^2}\)-competitiveness against input sequences with \(t\)-advance-notice and proportional weights without using preemption. Furthermore, for any \(t \geq \frac{1}{2}\), there is an algorithm that achieves \(\frac{1}{5}\)-competitiveness under the same conditions.
\end{corollary}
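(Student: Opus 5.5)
The plan is to instantiate \(\tau\)-\(\textsc{Persist}^*\) with \(\tau = 1/t\) and combine three earlier results: Theorem~\ref{thm:no_premption_needed}, the lemma that \(\tau\)-\(\textsc{Persist}^*\) and \(\tau\)-\(\textsc{Persist}\) build the same schedule, and Corollary~\ref{cor:C-benv-tau-persist}.

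\textbf{Case \(0<t\le \frac12\).} Set \(\tau = 1/t \ge 2 > 1\), so the algorithm is well defined. Every job in the input has \(t = \frac{1}{\tau}\)-advance-notice. By Theorem~\ref{thm:no_premption_needed}, for every job \(J_k\) that \(\tau\)-\(\textsc{Persist}^*\) schedules, its predecessor was never interrupted by \(\tau\)-\(\textsc{Persist}^*\).

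The only way \(\tau\)-\(\textsc{Persist}^*\) interrupts a running job is when some \(J_k\) is about to start over it, and that interrupted job is then exactly the predecessor of \(J_k\). Hence \(\tau\)-\(\textsc{Persist}^*\) never interrupts anything, i.e.\ it uses no preemption. I expect this to be the one step needing care: checking that "interrupt" in the \(\textsc{Persist}^*\) rule always coincides with "becoming the predecessor of some scheduled job". I would argue this from how predecessor chains are defined, in the simulated sense inherited from \(\tau\)-\(\textsc{Persist}\).

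Next, by the lemma equating the two schedules, \(\tau\)-\(\textsc{Persist}^*\) completes the same jobs as \(\tau\)-\(\textsc{Persist}\), so it earns the same value. Proportional weights are C-benevolent, so Corollary~\ref{cor:C-benv-tau-persist} gives a competitive ratio of \(\frac{\tau-1}{\tau^2+\tau-1}\). Substituting \(\tau=1/t\) and multiplying numerator and denominator by \(t^2\) gives
\[
\frac{1/t-1}{1/t^2+1/t-1} = \frac{t-t^2}{1+t-t^2}.
\]
At \(t=\frac12\) this equals \(\frac{1}{5}\).

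\textbf{Case \(t\ge \frac12\).} By the remark following Definition~\ref{t-ad}, any input with \(t\)-advance-notice also has \(\frac12\)-advance-notice. Running \(2\)-\(\textsc{Persist}^*\) and applying the first case with \(t=\frac12\) gives a \(\frac15\)-competitive algorithm that uses no preemption.
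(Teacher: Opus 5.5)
Your proposal is correct and follows the route the paper intends. You set $\tau = 1/t$, use Theorem~\ref{thm:no_premption_needed} to rule out preemption, use the lemma equating the $\tau$-$\textsc{Persist}^*$ and $\tau$-$\textsc{Persist}$ schedules to transfer the value, apply Corollary~\ref{cor:C-benv-tau-persist} for the ratio, and take $\tau=2$ via the advance-notice monotonicity remark when $t\ge\frac12$.

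On the step you flagged: $\tau$-$\textsc{Persist}^*$ can interrupt $J_\ell$ on behalf of a $J_k$ that it then declines to start. The theorem's literal statement assumes $J_k$ is scheduled by $\tau$-$\textsc{Persist}^*$, so you should appeal to its proof instead. That proof only uses that $J_k$ interrupts $J_\ell$ in the $\tau$-$\textsc{Persist}$ run, which gives $a_k<\sigma_\ell$, so $\tau$-$\textsc{Persist}^*$ never started $J_\ell$ in the first place.
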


\medskip
\begin{remark}
    As noted in a previous remark, when slack is zero, we get a slightly better competitive ratio from \(\tau\)-\textsc{Persist}, which becomes \(\frac{\tau -1}{\tau^2}\)-competitive. It follows that this better competitive ratio also carries to the case of advance-notice.
\end{remark}

% 

% We also have the following negative result, which we will prove in appendix \ref{sec:advance-notice-negative}.

% \medskip

\subsection{A Negative Result for Preemption-Free Advance Notice with proportional weights} 
    \label{sec:advance-notice-negative} % are these upper bounds? 
    
\begin{restatable}{theorem}{negativeadvancenotice}
    For every \(t\) and for any throughput maximization algorithm that does not use preemption, there exists an input sequence with proportional weights and \(t\)-advance-notice such that the competitive ratio of the the algorithm against this input sequence is \(\frac{t}{2t+1} - \delta\) for any \(\delta > 0\).
\end{restatable}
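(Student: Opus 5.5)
We construct an adversarial instance, working in the interval setting (zero slack). Interval instances are a special case of throughput, so a lower bound there is a lower bound in general. Weights are proportional, so the value of a job is its length.

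\textbf{Step 1: force the algorithm to commit.} At time $0$ the adversary announces a short job $J_0$ with $p_0 = 1$, release time $r_0 = t$ and zero slack. Its announcement time is then $a_0 = 0$, exactly $t$-advance-notice. A non-preemptive algorithm that never runs $J_0$ may be given no further jobs, which makes its ratio $0$. So we may assume it commits to $J_0$ at time $t$, and the commitment lasts over $[t, t+1)$.

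\textbf{Step 2: punish the commitment.} Once the algorithm has committed (at time $t$, after which it cannot abort), the adversary announces a long job $J_1$ of length $L$. Its release time is just inside $J_0$'s execution, $r_1 = t+1-\varepsilon$, and its announcement time is $a_1 = r_1 - tL$. Since the adversary has already observed the commitment at time $t$, it needs $a_1 \ge t$, i.e.\ $tL \le 1 - \varepsilon$, so $L \approx 1/t$. The job $J_1$ overlaps $J_0$, so the algorithm, which cannot preempt, gains only $1$, while $\mathrm{OPT}$ takes $J_1$ for about $1/t$.

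This single round gives a ratio of roughly $t$, which is weaker than the claimed $\frac{t}{2t+1}$.

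\textbf{Step 3: the refinement giving the stated constant.} Two refinements are needed.
\begin{itemize}
\item \emph{Exploit slack.} Let $J_1$ have release time $r_1 \in (t, t+1)$, deadline $r_1 + L + 1$ and length $L$, so it has slack $1$. Then $\mathrm{OPT}$ can run $J_0$ and $J_1$ back to back and collect $1 + L$. The algorithm, which sees $J_1$ only at $a_1 \ge t$, is already committed to $J_0$. It can then still run $J_1$ after $J_0$ finishes, so the construction must block that.
\item \emph{Block the late start.} To block it, the adversary must announce further jobs to trap the algorithm again. This leads to an iterated construction: at each commitment the adversary announces the largest job whose announcement time is still at least the commitment time, and this job becomes available only in a way that conflicts.
\end{itemize}

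\textbf{Step 4: choose the parameters.} Let $x$ be the total the algorithm collects. The plan is to tune the chain so that $\mathrm{OPT}$ collects about $x(2 + 1/t)$, giving ratio $\frac{t}{2t+1} = \frac{1}{2 + 1/t}$. The ``$2$'' comes from $\mathrm{OPT}$ taking jobs equal in total to the algorithm's own, placed alternatively thanks to slack. The ``$1/t$'' comes from the long job that the notice constraint still lets the adversary spring after the commitment. Letting $\varepsilon \to 0$ yields the $-\delta$.

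\textbf{Main obstacle.} The key obstacle is Step 3: with slack, the algorithm could defer the long job, so the adversary must design deadlines such that the algorithm's commitment blocks both the long job and an equally valuable alternative. Meanwhile every job must respect $r_i - a_i = t p_i$, with announcements made after the commitment time. I would first try a two-job gadget: $\mathrm{OPT}$ gets a copy of $J_0$ with a shifted window plus $J_1$, while any commitment of the algorithm is shown to conflict with both.
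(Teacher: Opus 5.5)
\textbf{There is a genuine gap.} The only concrete construction in your proposal (Steps 1--2) gives a bound of about $t$, and nothing once $t\ge 1$. Steps 3--4 describe a hoped-for iterated chain but never specify the jobs, so $\frac{t}{2t+1}$ is never established. The mechanism you sketch also works against you. Putting slack on the long job $J_1$ only helps the algorithm, which can simply run $J_1$ after $J_0$. Making $J_0$ zero-slack in Step 1 prevents OPT from ever collecting $J_0$ together with anything that overlaps $[t,t+1)$.

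Even a correctly oriented two-job gadget (bait with slack, long job without) is capped. Suppose a job is announced at or after the commitment time $s$ and is one the algorithm cannot run. It must be released before $s+1$, so $tp \le r-a < 1$ and its length is below $1/t$. Then ALG earns $1$ and OPT earns at most $1+1/t$, which only gives $\frac{t}{t+1}$. The extra ``$+1$'' in $2+1/t$ cannot come from one long job.

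The paper's proof supplies two ideas you are missing.

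First, the slack goes on the bait, not on the later job. Announce $J_1$ with $a_1=0$, $r_1=t$, $p_1=1$ and a very large deadline. The algorithm starts it at some time $s_1$ of its choosing; if it never does, it earns $0$. Once started without preemption, it is stuck on $[s_1,s_1+1)$. OPT can move this job out of that window: to $[t,t+1)$ if $s_1\ge t+1$, and after everything else otherwise.

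Second, at time $s_1+\gamma$ the adversary announces two kinds of zero-slack jobs:
\begin{itemize}
\item your long job, released at about $s_1+1-2\gamma$ with length about $1/t$;
\item a chain of tiny jobs of length at most $\gamma/t$ that tile $[s_1+2\gamma,\,s_1+1-2\gamma]$.
\end{itemize}
Each tiny job needs only notice $t p_i\le\gamma$. That is why many small jobs, unlike one large one, can fill the committed window after the commitment has been observed.

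Every post-commitment job is zero-slack and must start inside $[s_1,s_1+1)$, so a non-preemptive algorithm earns exactly $1$. OPT earns about $1+1+1/t$: the relocated bait, the filler, and the long job. This gives a ratio of $\frac{t}{2t+1}+O(\gamma)$. So the ``$2$'' is the relocated bait plus the filler, not a chain of repeated traps.
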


\begin{proof}
    We will show that for any $\varepsilon > 0$ and any online algorithm $ALG$ there exists a sequence of jobs $\mathcal{J}$ such that $ALG \leq OPT \cdot \frac{t}{2t+1} + \varepsilon$.

    First we fix $\gamma \leq \frac{\varepsilon (t\cdot (2t+1))}{2+t}$.
    
    First we announce $J_1$ with $a_1 = 0$, $r_1 = t$, $p_1 = 1$ and $d_1$ arbitrarily large.
    
    If $J_1$ is never scheduled by $ALG$ then the adversary declares it to be the last job and we have infinite competitive ratio.
    
    If $J_1$ is scheduled at time $s_1$ then at $s_1 + \gamma$ we announce $J_2$ with $a_2 = s_1+ \gamma$, $r_2 = s_1+ 1 - 2\cdot \gamma$, $p_2 = \frac{p_1 - \gamma}{t}$ and $d_2 = r_2 + p_2$. Note that $J_2$ is impossible to schedule since no preemption is permitted.
    Next the adversary announces $J_3$ with $a_3 = s_1 + \gamma$, $r_3 = s_1+ 2\cdot \gamma$ $p_3 = \frac{\gamma}{t}$ $d_3 = r_3 + p_3$. The adversary will also announce: $J_i$ with $a_i = s_1 + \gamma $, $r_i = d_{i-1}$, $p_i = d_i - r_i$ and $d_i = \min(r_i + \frac{\gamma}{t}, r_2)$. For $i > 3$ until the interval $[s_1, s_1 + p_1]$ is covered.
    
    In this schedule, the only job that the algorithm has done is $J_1$ which has value $p_1$. Meanwhile an optimal algorithm $OPT$ can schedule all jobs since none of them overlap except for $J_1$, and since $J_1$ has arbitrarily large slack, we can always fit it somewhere else.
    
    So the value of $ALG = p_1 = 1$ and $OPT = p_1 + p_2 + p_3 + \sum_i p_i = 1 + \frac{1 - \gamma}{t} + (1 - 2\gamma) = \frac{2t + 1 - 2\gamma - \gamma\cdot t }{t}= \frac{2t + 1}{t} - \frac{2\cdot\gamma + \gamma\cdot t}{t}$.
    
    So $OPT \cdot \frac{t}{2t+1} = ALG - \frac{(2\gamma + \gamma \cdot t)\cdot}{t\cdot (2t+1)}$.  By our choice of $\gamma$ we get that $ALG \leq OPT \cdot \frac{t}{2t+1} + \varepsilon$
\end{proof}
\begin{figure}[ht]
    \centering
    \begin{tikzpicture}[x=0.8cm,y=1.2cm,>={Stealth[length=2.5pt]}]
    % axis
    \node[draw] at (0,0) {\(ALG\)};
    \interval{2}{5}[][dotted][\(a_1\)][]
    \interval{2}{2}[][black][][]
    \interval{5}{5}[][black][\(r_1\)][]
    \interval{20}{20}[][black][][\(d_1\)]
    \interval{7}{10}[\(J_1\)][blue][][]

    \node[draw] at (0,-1) {\(OPT\)};
    \interval[-1]{2}{5}[][dotted][\(a_1\)][]
    \interval[-1]{2}{2}[][black][][]
    \interval[-1]{5}{5}[][black][\(r_1\)][]
    \interval[-1]{20}{20}[][black][][\(d_1\)]
    \interval[-1]{7}{7}[][black][\(a_2,a_3\)][]
    \interval[-1]{7.2}{7.4}[][olive][][]
    \interval[-1]{7.4}{7.6}[][olive][][]
    \interval[-1]{7.6}{7.8}[][olive][][]
    \interval[-1]{7.8}{8.0}[][olive][][]
    \interval[-1]{8.0}{8.2}[][olive][][]
    \interval[-1]{8.2}{8.4}[][olive][][]
    \interval[-1]{8.4}{8.6}[][olive][][]
    \interval[-1]{8.6}{8.8}[][olive][][]
    \interval[-1]{8.8}{9.0}[][olive][][]
    \interval[-1]{9.0}{9.2}[][olive][][]
    \interval[-1]{9.2}{9.4}[][olive][][]
    \interval[-1]{9.4}{9.6}[][olive][][]
    \interval[-1]{9.6}{9.8}[][olive][][]
    \interval[-1]{9.8}{10}[][olive][][]
    \interval[-1]{7.2}{10}[\(J_3,\ldots,J_n\)][olive][][]
    \interval[-1]{10}{13}[\(J_2\)][red][][]
    
    \interval[-1]{14}{17}[\(J_1\)][blue][][]
    
    % \interval{1}{19}[][black][$r_1$][$d_1$]
    % \interval{6}{11.5}[][black][$a_1$][$b_1$]
    % \interval[-1]{6.5}{11.5}[][black][$r_2$][$d_2$]
    % \interval[-1]{10}{11.5}[][black][$r_2+s_2$]
    % \intervalthin[-2]{10}{10.5}[$\varepsilon$][black]
    % \interval[-2]{10.5}{11.5}[][black][$r_4$][$d_4$]
    \end{tikzpicture}
    \caption{$ALG$ vs $OPT$ schedule}
    \label{fig:advvsopt}
\end{figure}

\medskip
\begin{corollary}
    Regardless of how much advance-notice a sequence of jobs is guaranteed, no algorithm can ever have a competitive ratio of $\frac{1}{2}$ or better. 
\end{corollary}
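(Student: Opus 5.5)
The plan is to derive the corollary directly from the preceding theorem, which for every \(t>0\) exhibits, against any non-preemptive algorithm, an instance with proportional weights and \(t\)-advance-notice on which the ratio is \(\frac{t}{2t+1}-\delta\) for every \(\delta>0\). The point to establish is that this quantity stays strictly below \(\frac{1}{2}\) for every fixed \(t\). The argument should be made for an arbitrary fixed advance-notice parameter \(t\), not only in the limit.

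\textbf{Step 1 (the inequality).} For every \(t\ge 0\),
\[
\frac{t}{2t+1}=\frac12-\frac{1}{2(2t+1)}<\frac12 .
\]
The gap \(\frac{1}{2(2t+1)}\) is strictly positive for each finite \(t\), and it shrinks to \(0\) only as \(t\to\infty\).

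\textbf{Step 2 (applying the theorem).} Fix an arbitrary advance-notice guarantee \(t\) and any algorithm \(ALG\) that does not use preemption. Take \(\delta>0\) with \(\delta<\frac{1}{2(2t+1)}\), for example \(\delta=\frac{1}{4(2t+1)}\). The theorem then yields an input sequence with \(t\)-advance-notice on which
\[
\frac{ALG}{OPT}\le\frac{t}{2t+1}-\delta<\frac12 .
\]
Hence \(ALG\) is not \(\frac12\)-competitive under this guarantee, and by the remark after Definition~\ref{t-ad} the same holds for any weaker guarantee.

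\textbf{Step 3 (the only subtle point).} The step that needs care is the quantifier order. The corollary asserts that, for each \(t\), no algorithm reaches ratio \(\frac12\). It is not a uniform bound strictly below \(\frac12\) holding for all \(t\) simultaneously, since \(\frac{t}{2t+1}\to\frac12\) as \(t\to\infty\). Because \(t\) is fixed before \(\delta\) is chosen, no issue arises. The corollary should also be read in the same non-preemptive setting as the theorem it is derived from.
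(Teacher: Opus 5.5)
Your proposal is correct and matches the paper. The paper gives no separate argument and reads the corollary directly off the preceding theorem, using $\frac{t}{2t+1}<\frac12$ for each fixed $t$ and understanding the statement in the non-preemptive setting, as you note. Your restriction $\delta<\frac{1}{2(2t+1)}$ is the right safeguard: the theorem's proof actually establishes $ALG\le OPT\cdot\frac{t}{2t+1}+\varepsilon$, a small additive excess above $\frac{t}{2t+1}$ rather than a deficit, and your choice of slack still keeps the ratio strictly below $\frac12$.
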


\subsection{Advance Notice without preemption Does Not help Arbitrary C and D Benevolent instances}
    \label{sec:advance-notice-c-and-d}
    Restricting ourselves to proportional weights might in some situations seem too strict. We will now look at the performance of algorithms in cases where we are not dealing with proportional weights and show that against certain \(C\)-benevolent and \(D\)-benevolent input sequences, regardless of $t$, an algorithm running on a set of jobs with $t$-advance-notice can be made to perform arbitrarily poorly and cannot achieve a constant competitive ratio.

The algorithm \(\tau\)-\(\textsc{Persist}^*\) described in \cref{subsec:advance-notice-alg} simulates \(\tau\)-\textsc{Persist}, which performs well against arbitrary C-benevolent and D-benevolent input sequences as well as proportional weights. It is why it may be surprising that advance-notice does not seem to help us remove preemption against arbitrary C-benevolent function.

\medskip
\begin{theorem}
    For any throughput maximization algorithm \(ALG\) that does not use preemption and any $t \in (0,1)$, there exists a C-benevolent function $f$ and a sequence of $f$-related jobs such that any algorithm \(ALG\) can have arbitrarily bad competitive ratio against this sequence of jobs.
\end{theorem}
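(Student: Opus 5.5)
We prove the theorem with an adversary construction that exploits the one freedom the C-benevolent class gives us: the weight of a job can grow much faster than its length. Fix $t\in(0,1)$ and a target ratio $M$. We take a steep convex function such as $f(x)=x^{N}$ (or $f(x)=B^{x}-1$), with $N$ chosen depending on $M$ and $t$. The overall plan is as follows. Any non-preemptive algorithm must at some point commit to running a job. While that job runs, the adversary releases a slightly longer job whose weight is, under $f$, a huge multiple of the running job's weight. That new job is tight (zero slack, or a deadline too early to wait), so the algorithm cannot take it. Advance notice does not help, because the adversary can announce the big job only after the algorithm has committed.

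\textbf{Step 1: the first job.} Announce $J_1$ with $a_1=0$, $r_1=t$, $p_1=1$ and a very large deadline, exactly as in the proof of the preceding negative result. If the algorithm never starts $J_1$, its competitive ratio is unbounded and we are done. Otherwise it starts $J_1$ at some time $s_1\ge t$, and without preemption it is busy on $[s_1,s_1+1)$.

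\textbf{Step 2: the punishing job.} At time $s_1+\gamma$, announce a job $J_2$ with length $p_2=L$, release time $r_2=s_1+\gamma+tL$ so that it has exactly $t$-advance-notice, and deadline $d_2=r_2+L$, so it has zero slack. We need $r_2<s_1+1$, which means $\gamma+tL<1$. Since $t<1$, we can take $L$ close to (but below) $(1-\gamma)/t$, and in particular $L>1$. This is where $t\in(0,1)$ enters: it is what allows $L$ to exceed $p_1=1$. Then $J_2$ is a zero-slack interval starting strictly inside $J_1$'s execution, so the algorithm cannot schedule it. OPT, on the other hand, schedules $J_2$ and places $J_1$ elsewhere using its large deadline. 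Hence ALG earns $f(1)$ while OPT earns at least $f(L)+f(1)$. Choosing $f$ with $f(L)/f(1)\ge M$ — for example $f(x)=x^{N}$ with $L^{N}\ge M$, which is possible because $L>1$ — makes the ratio at most $1/M$.

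\textbf{Step 3: repetition for an asymptotic bound.} If an asymptotic statement is wanted, append disjoint copies of this nemesis construction far apart in time, as the paper notes for all its negative results. If one worries that a single $f$ must be fixed before $M$, the cleaner form is this: fix an $f$ that is superexponentially convex, and iterate with lengths $L_1=1<L_2<\cdots$. Each new job is announced only after the algorithm commits to the previous one, and the ratio $f(L_{i+1})/\sum_{j\le i}f(L_j)$ is unbounded. The algorithm can only ever complete jobs it committed to, each of which is dominated by the next job it missed.

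\textbf{Main obstacle.} The delicate step is ensuring the algorithm cannot dodge the trap by idling or by committing only at carefully chosen times. The adversary must react only after a commitment, and each announced trap job must be fully announced in time to satisfy the $t\cdot p$ notice while still being released inside the committed interval. Announcing at $s+\gamma$ with release before $s+p$ requires $tL<p$. This is exactly why the construction fails for $t\ge1$ and why the theorem is stated for $t\in(0,1)$. I would also double-check that a job which is never started counts as a win for the adversary, by simply stopping the sequence at that point.
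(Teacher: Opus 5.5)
Your Steps 1--2 are essentially the paper's proof. You force a commitment to a length-$1$ job $J_1$, then announce a zero-slack job of length just under $1/t$ that is released inside $J_1$'s execution, with a steep power $f(x)=x^N$ chosen from the target ratio (the paper uses exponent $\log_{(1-\varepsilon)/t}N$). Giving $J_1$ a large deadline instead of zero slack is an inessential difference. Your choice $r_2=a_2+tL$ with $\gamma+tL<1$ actually enforces $t$-advance-notice correctly, whereas the paper's parameters give $r_2-a_2=1-3\varepsilon<t\,p_2=1-\varepsilon$.

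The optional fixed-$f$ iteration in Step 3 does not work as written. The algorithm can never commit to a zero-slack trap, so nothing iterates and the loss stays bounded by about $f(1/t)/f(1)$. For a single fixed $f$, scale the first job's length instead: for example, $f(x)=e^x-1$ gives a ratio of about $f(p_1/t)/f(p_1)\to\infty$ as $p_1\to\infty$.
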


\begin{proof}
        
    We will show that we can generate a sequence of jobs where \(ALG\)=1 and \(OPT\)=$N$ for any $N$.
    
    We once again use an adversarial argument. Our adversary chooses $f(x) = x^{\log_\frac{1-\varepsilon}{t}(N)}$ to be our function $f$ for some very small epsilon. We note that this function is $C$-benevolent for $N$ sufficiently larger than $t$.
    
    Our adversary first announces $J_1$ with $a_1 = 0$, $r_1 = t$, $p_1 = 1$, $d_1 = t+1$ and $w_1 = f(1)$.
    
    If our algorithm does not take the job the adversary declares it to  be the last, we can assume that the algorithm takes the job.
    
    Our adversary next announces $J_2$ with $a_2 = t+\epsilon$, $r_2 = t+1-2\varepsilon$, $p_2 = \frac{1 - \varepsilon}{t}$ $d_2 = r_2 + p_2$ which the algorithm cannot schedule.
    
    The optimal algorithm can take $J_2$ which has value $f(\frac{1 - \varepsilon}{t}) = (\frac{1 - \varepsilon}{t})^{\log_\frac{1-\varepsilon}{t}(N)} = N$ and the algorithm can only take $J_1$ which has value $f(p_1) = f(1) = 1$. We have an arbitrarily large competitive ratio as desired.
\end{proof}

\begin{theorem}
    For any $t$ and any throughput maximization algorithm \(ALG\) that does not use preemption, there is a sequence of unweighted jobs that has $t$-advanced notice such that \(ALG\) does not achieve a finite competitive ratio against this input sequence. In particular, since \(f(p) = 1\) is a D-benevolent function, there exists a D-benevolent function $f$ and a sequence of $f$-related jobs such that any algorithm \(ALG\) can have arbitrarily bad competitive ratio against this sequence of jobs..
\end{theorem}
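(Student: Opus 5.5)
We prove the statement with an adaptive adversary argument, in the same spirit as the two preceding negative results. The difference is that, since all weights equal $1$, the loss must come from \emph{many} short jobs blocked by a single long one, rather than from one heavy job. Fix $t>0$, a non-preemptive algorithm $ALG$, and a target ratio $N$.

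\textbf{Step 1: force $ALG$ to commit to a long job.} The adversary announces $J_1$ with $a_1=0$, $r_1=t$, $p_1=1$, $w_1=1$ and a very large deadline $d_1$. If $ALG$ never starts $J_1$, the adversary ends the sequence: then $ALG=0$ and $OPT=1$, so the ratio is unbounded. Otherwise let $s_1\geq t$ be the time at which $ALG$ starts $J_1$. Because there is no preemption, the machine of $ALG$ is then busy throughout $[s_1,s_1+1)$.

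\textbf{Step 2: flood the blocked window with tiny zero-slack jobs.} At time $s_1$ (or just after), the adversary announces $N$ jobs $J_2,\dots,J_{N+1}$, each with processing time $\delta=\gamma/N$ for a small $\gamma\in(0,1)$. Each is an interval job (slack $0$) with
\[
r_i = a_i + t\delta = s_1 + t\delta + (i-2)\delta, \qquad d_i = r_i+\delta .
\]
Since each has $r_i-a_i\geq t\delta$, each has $t$-advance-notice. All of them lie inside $[s_1, s_1+t\delta+\gamma]$. Taking $\gamma$ small enough that $t\delta+\gamma<1$, this whole window is contained in $(s_1,s_1+1)$. Because the jobs have zero slack and $ALG$ is busy with $J_1$, $ALG$ completes none of them.

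\textbf{Step 3: compare with $OPT$.} $OPT$ schedules all $N$ tiny jobs in their fixed intervals, which are pairwise disjoint. It then runs $J_1$ afterwards, which is possible because $d_1$ is arbitrarily large. So $OPT\geq N+1$ while $ALG=1$, and since $N$ is arbitrary the competitive ratio is unbounded. The D-benevolent statement follows at once, since $f\equiv 1$ on positive lengths is D-benevolent.

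\textbf{Main obstacle.} The delicate point is the timing in Step 2. The adversary must learn $s_1$ before announcing the tiny jobs, yet each tiny job's release must still lie at least $t\delta$ after its announcement. Announcing at time $s_1$ and making $\delta$ tiny resolves this: the required notice $t\delta$ shrinks with $\delta$, so however large $t$ is, the releases still fall well inside $J_1$'s execution. It remains to check that the adversary's announcements are legitimate. Announcement times are nondecreasing, and releases are consistent with $r_i\leq r_{i+1}$, since all tiny jobs are released after $r_1=t\leq s_1$.
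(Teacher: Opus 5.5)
Your proof is correct and is essentially the paper's argument: a single unit-length job blocks many tiny, unit-weight, zero-slack jobs, and their required notice $t\,p_i$ is negligible because $p_i$ is tiny. The only difference is cosmetic: the paper gives $J_1$ zero slack ($d_1=t+1$), which forces its start time to be $t$, and OPT then collects just the $N$ tiny jobs, whereas you give $J_1$ a large deadline, react to the chosen start time $s_1$ (announcing just after $s_1$ is the safer choice than exactly at $s_1$), and let OPT also run $J_1$ later, for a value of $N+1$.
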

\begin{proof}
    Once again we show that we can generate a sequence of jobs where \(ALG=1\) and $OPT=N$ for any $N$.

    Our adversary first announces $J_1$ with $a_1 = 0$, $r_1 = t$, $p_1 = 1$, $d_1 = t+1$ and $w_1 = f(1)$.

    We then announce jobs $J_2,\ldots, J_{N+1}$ such that $\forall i,j \in \{2,\ldots,N+1\}$: $[a_i,d_i] \subset (r_1, d_1)$ and $[a_i,d_i] \cap [a_j,d_j] = \emptyset$. Since all jobs have the same weight, the value of a algorithm that schedules $J_2,\ldots, J_{N+1}$ is $N$, as such \(OPT\) has value at least $N$ but $ALG = 1$ as desired.
\end{proof}

\section{Unweighted Throughput with Revoking}
    \label{sec:unweighted}
    
\subsection{A Positive Result for a Limited Number of  Processing Times}

\medskip
\begin{restatable}{theorem}{klengthedf}\label{thm:k-length-edf}
For the unweighted throughput problem on one machine in the preemption-revoke model with at most $k$ distinct processing times, Algorithm~\ref{alg:k-length-edf} is $1/(2k)$-competitive.
\end{restatable}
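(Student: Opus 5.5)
The plan is a charging argument: every job in an optimal schedule $OPT$ is assigned to a job completed by Algorithm~\ref{alg:k-length-edf}, and no completed job receives more than $2k$ charges. I will assume the algorithm behaves like an earliest-completion-time rule with revoking. It keeps one running job, and a newly released job $J$ revokes the current job $J_c$ only if starting $J$ now finishes strictly before $J_c$ would. Since both would run from now on, this means $J$ is strictly shorter in its remaining work; in particular $p(J) < p_c$ whenever $J_c$ was started no later than $r(J)$. Revoked jobs are lost. When the machine becomes idle, it starts the available job that would finish earliest.

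The first structural step is to show that revocation chains are short. Suppose $J^{(1)}, J^{(2)}, \ldots$ is a chain in which each job revokes its predecessor, ending in a completed job $J^{*}$. The processing times along the chain strictly decrease, so the chain has at most $k$ jobs. Define the \emph{block} of $J^{*}$ as the time interval from the start of the first job in its chain to the completion of $J^{*}$. At every moment the machine is either idle or inside exactly one block. Moreover, each block is covered by at most $k$ executing pieces, one per distinct processing time.

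The second step is the charging itself. Take a job $G$ in $OPT$ that the algorithm does not complete, scheduled by $OPT$ on $[\sigma^*, \sigma^*+p(G))$.
\begin{itemize}
\item The machine cannot be idle at time $\sigma^*$ with $G$ unstarted. Otherwise the algorithm would have started $G$, or some job finishing no later, when it became idle.
\item So $\sigma^*$ lies in some block, or $G$ was itself started and revoked. In either case, charge $G$ to the block containing $\sigma^*$, or the block of the job that revoked $G$.
\end{itemize}
The key claim is that within a single chain piece $J^{(i)}$, $OPT$ can start at most two of its own jobs there. If $OPT$ started a job $G$ strictly inside the execution of $J^{(i)}$ and $G$ ended before $J^{(i)}$ would, then $G$ would have revoked $J^{(i)}$ (ECT rule), or $G$ was released too late to do so. Either way, at most one $OPT$ job fits entirely inside the piece, plus one that straddles its right end, which is the usual factor $2$ from the $\frac{1}{2}$ analysis of Hoogeveen, Potts and Woeginger~\cite{HoogeveenPW00}.

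Summing over the at most $k$ pieces of a block gives at most $2k$ charged $OPT$ jobs per completed job. Adding the $OPT$ jobs that the algorithm also completes, each charged to itself, must then still fit within the $2k$ budget, which is the main obstacle. I would handle it by showing that a completed job which $OPT$ also schedules occupies one of the two slots of its own piece, so it does not create an extra charge.

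The hardest step is the ``at most two per piece'' claim, because $OPT$ jobs have slack and may be released after the piece started. Here I would first try a case split on whether $r(G) \le \sigma^*$ is inside the piece. If it is, the revocation test applies directly at time $r(G)$, with $G$ finishing no later than its $OPT$ slot. If it is not, $G$ was released earlier and was passed over when the piece started, so the ECT choice guarantees that $G$ does not end before the piece. In both cases, $OPT$'s jobs overlapping a piece are pinned to one contained job and one crossing the right endpoint.
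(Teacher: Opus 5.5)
\textbf{There is a genuine gap.} It has two parts: the proof is about a different algorithm, and the charge accounting does not close at $2k$.

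\textbf{Wrong algorithm.} The statement concerns Algorithm~\ref{alg:k-length-edf}, which is not the earliest-completion-time rule you assume. It keeps one queue per processing-time class and always starts the EDF job of the shortest nonempty class. It revokes a running job of length $p_h$ at any event where a strictly shorter job is pending, even if the running job would finish first. For example, a length-$10$ job started at time $0$ is revoked at time $9.5$ if a length-$1$ job is released then. So your key step, ``$G$ would end before $J^{(i)}$ would, hence $G$ would have revoked it,'' is not a property of this algorithm.

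Your chain step survives: lengths strictly decrease along a revocation chain, so a chain has at most $k$ pieces. The invariant you actually need is this: while a piece of length $p_h$ runs, every pending job has length at least $p_h$. A shorter job would have forced a revocation at its release, or would have been chosen when the piece started.

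\textbf{Accounting gap.} You split each piece into ``one $\textit{OPT}$ job entirely inside plus one straddling the right end.'' You then absorb the self-charge of $J^{*}$ by claiming it occupies one of these slots. That is false: jobs have slack, so $\textit{OPT}$ may run $J^{*}$, or any revoked chain job that lies in $\textit{OPT}$, at a completely different time. Your per-piece count is by $\textit{OPT}$ start times, so it does not cover these jobs at all. As written, the argument only certifies about $3(q+1)\le 3k$ charges per completed job.

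The missing observation concerns a never-started $\textit{OPT}$ job $G$ whose $\textit{OPT}$ start $\sigma^*$ lies in the execution interval $I_t$ of a piece of length $p_h$. Such a $G$ is pending at $\sigma^*$, so $p(G)\ge p_h\ge |I_t|$. Two such jobs would therefore overlap in $\textit{OPT}$, so \emph{at most one} never-started job is charged per piece.

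The second slot of each piece then goes to the piece job itself, wherever $\textit{OPT}$ places it. It is a self-charge if the piece is $J^{*}$, and a charge to $J^{*}$ if the piece was revoked. This gives at most $2(q+1)\le 2k$ charges per completed job, which is exactly the paper's Type A/Type B split.

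You should also make the rule for started-then-revoked jobs unambiguous; the paper always charges them to the completed end $R(j)$ of their chain. The same repair would also make your argument correct for the ECT rule itself. There the invariant is that every pending job would finish no earlier than the running piece's planned completion, so again at most one never-started $\textit{OPT}$ job can start inside each piece.
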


Let the distinct processing times be
\(
p_1<p_2<\cdots<p_k.
\)
Algorithm \ref{alg:k-length-edf} always prefers shorter processing times, and uses earliest deadling first (EDF) within each processing time class. The analysis can be seen as a substantial extension of indirect charging as used in \cite{BorodinK23}.

\begin{proof}
Fix an optimal feasible schedule $\textit{OPT}$, and let $\textit{ALG}$ be the set of jobs completed by the algorithm. For each job $j\in \textit{OPT}$, let $\sigma_j$ denote the start time of $j$ in $\textit{OPT}$.

We first define the completed descendant of a job started by the algorithm. If a started job $x$ completes, set $R(x)=x$. If $x$ is revoked and the algorithm next starts a job $y$, set $R(x)=R(y)$. This is well-defined because a job is revoked only when a strictly shorter pending job is available, and hence the processing time strictly decreases along the chain. Therefore every revocation chain eventually ends at a completed job.

We define a charging map $\phi:\textit{OPT}\to \textit{ALG}$ as follows.
\begin{enumerate}
    \item If $j\in \textit{ALG}$, charge $j$ to itself.
    \item If $j\notin \textit{ALG}$, but the algorithm starts $j$ and later revokes it, charge $j$ to $R(j)$.
    \item If the algorithm never starts $j$, let $x$ be the job processed by the algorithm at time $\sigma_j$, and charge $j$ to $R(x)$.
\end{enumerate}
The last rule is well-defined. Since $\textit{OPT}$ starts $j$ at time $\sigma_j$, we have $\sigma_j\le r_j+s_j$, so $j$ is feasible and pending for the algorithm at time $\sigma_j$. Thus the algorithm cannot be idle at time $\sigma_j$.

Now fix a completed job $a\in \textit{ALG}$, and suppose $a$ has processing time $p_i$. Consider the revocation chain ending at $a$:
\(
x_0=a,\;x_1,\ldots,x_q,
\)
where $x_{t+1}$ is the job revoked immediately before the algorithm starts $x_t$, if such a job exists. The processing times in this chain strictly increase:
\(
p(x_0)<p(x_1)<\cdots<p(x_q).
\)
Hence the chain contains at most one job from each class $p_i,p_{i+1},\ldots,p_k$, and therefore
\(
q+1\le k-i+1.
\)

We bound the number of jobs charged to $a$ by considering each chain job $x_t$. There are two possible sources of charges associated with $x_t$.

\smallskip
\noindent
\emph{Type A: the chain job itself.}
If $x_t\in \textit{OPT}$, then $x_t$ contributes at most one charge to $a$: either by self-charge if $x_t=a$, or by the revoked-job rule if $x_t$ is later revoked. Thus the chain contributes at most $q+1$ Type A charges.

\smallskip
\noindent
\emph{Type B: never-started jobs charged through $x_t$.}
Let $I_t$ be the interval during which the algorithm processes $x_t$ before $x_t$ either completes or is revoked. Suppose $x_t$ has processing time $p_h$. Then $|I_t|\le p_h$, with strict inequality if $x_t$ is revoked.

While the algorithm is processing $x_t$, no pending job of processing time smaller than $p_h$ exists; otherwise the shorter-length priority rule would revoke $x_t$. Therefore any never-started job $j\in \textit{OPT}$ charged through $x_t$ must have processing time at least $p_h$. Since $\textit{OPT}$ is feasible, at most one job of processing time at least $p_h$ can have its $\textit{OPT}$-start time inside the interval $I_t$, whose length is at most $p_h$. Thus each chain job $x_t$ contributes at most one Type B charge, and the chain contributes at most $q+1$ Type B charges in total.

Combining the two types, the number of jobs charged to $a$ is at most
\[
|\phi^{-1}(a)|\le 2(q+1)\le 2(k-i+1)\le 2k.
\]
Therefore every completed job of the algorithm receives at most $2k$ charges, and so
\[
|\textit{OPT}|
=\sum_{a\in \textit{ALG}}|\phi^{-1}(a)|
\le \sum_{a\in \textit{ALG}} 2k
=2k|\textit{ALG}|.
\]
Equivalently,
\[
|\textit{ALG}|\ge \frac{1}{2k}|\textit{OPT}|.
\]
Thus Algorithm~\ref{alg:k-length-edf} is $1/(2k)$-competitive.
\end{proof}

\begin{algorithm}[h]
\caption{$k$-Length EDF with Shorter-Length Priority}
\label{alg:k-length-edf}
\begin{algorithmic}[1]
    \State $J_{curr}\gets$ None
    \Comment {Job currently being processed}
    \For {$i=1,\ldots,k$}
        \State $Q_i\gets \emptyset$
        \Comment {Pending jobs with processing time $p_i$, ordered by EDF}
    \EndFor
    \State $ALG\gets \emptyset$
    \Comment {Set of jobs completed by the algorithm}
    \State $T\gets 0$
    \Comment {Current time}
    \\
    \While {there are more jobs to arrive, or $J_{curr}$ is not None, or some $Q_i\neq \emptyset$}
        \State Advance to the next release time or the completion time of $J_{curr}$
        \If {$J_{curr}$ completes at the current time $T$}
            \State $ALG\gets ALG\cup\{J_{curr}\}$
            \State $J_{curr}\gets$ None
        \EndIf
        \\
        \For {$i=1,\ldots,k$}
            \State Remove from $Q_i$ all jobs $J_j$ with $T>r_j+s_j$
            \Comment {Expired}
        \EndFor
        \For {$i=1,\ldots,k$}
            \State $A_i\gets$ set of jobs with processing time $p_i$ arriving at time $T$
            \State Insert all jobs in $A_i$ into $Q_i$ in EDF order
        \EndFor
        \\
        \If {$J_{curr}$ has processing time $p_h$ and $Q_i\neq \emptyset$ for some $i<h$}
            \State $J_{curr}\gets$ None
            \Comment {revoke the current job for a shorter pending job}
        \EndIf
        \\
        \If {$J_{curr}$ is None}
            \If {some $Q_i\neq \emptyset$}
                \State Let $i$ be the smallest index such that $Q_i\neq \emptyset$
                \State $J_j\gets$ extract\_min($Q_i$)
                \Comment {EDF within the shortest nonempty class}
                \State $J_{curr}\gets J_j$
            \EndIf
        \EndIf
    \EndWhile
\end{algorithmic}
\end{algorithm}

\subsection{Impossibilty Result for Bounded Processing Times}

For unweighted throughput on one machine in the preemption-revoke model, we prove that for every $k\ge 1$ there is an instance with at most $k$ distinct processing times on which any deterministic online algorithm completes at most one job while the offline optimum completes $k+1$ jobs. Hence no deterministic algorithm can guarantee a competitive ratio greater than $1/(k+1)$ on instances with at most $k$ processing times, and no deterministic algorithm can achieve a constant competitive ratio when the number of processing times is unrestricted.

For simplicity, we will represent an unweighted job \(J_i = (r_i,p_i,d_i,w_i)\) as \(J_i = (r_i, p_i,s_i)\), where \(s_i\) is the slack of the job \(J_i\).

The lower-bound construction is based on a recursive construction of adversarial gadgets. A gadget starts at a time chosen by the adversary, releases jobs adaptively according to the behavior of $\textit{ALG}$, and has a bounded time span. The key point is that a larger processing time can be used as a ``bait job'' that hides a complete smaller gadget inside its execution interval. We begin with the equal-length terminal gadget and then give the general nested construction.

\textbf{One Distinct Processing Time}

The base case ($k = 1$). 
%We first record the equal-length case, which is the terminal gadget for the general construction. 
We present the construction with a arbitrary processing time $q>0$.

\medskip
\begin{restatable}{theorem}{equallengthrevoke}\label{thm:equal-length-revoke}
For the unweighted throughput problem on one machine in the preemption-revoke model for equal-length jobs, no deterministic online algorithm can guarantee a competitive ratio greater than $1/2$.
\end{restatable}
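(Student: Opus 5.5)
The plan is an adaptive adversary argument with jobs of a single length $q$. In every branch the algorithm completes at most one job and the offline optimum completes two, which gives the bound of $1/2$. (Disjoint copies of the construction make the bound asymptotic, as noted in Section~\ref{sec:related}.) The main tool is a ``bait'' job with large slack, followed by zero-slack jobs, i.e.\ intervals. Under revoking, abandoning a zero-slack job loses it permanently.

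\textbf{Step 1 (bait job).} At time $0$ release $J_1=(0,q,2q)$, so $J_1$ may start anywhere in $[0,2q]$. If the algorithm never starts $J_1$ by time $2q$, stop: $\textit{ALG}=0<1=\textit{OPT}$. Otherwise let $\sigma\le 2q$ be the time it starts $J_1$.

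\textbf{Step 2 (the interrupting interval).} Choose $\varepsilon>0$ small, with $\sigma+\varepsilon<q$ when $\sigma<q$. At time $\sigma+\varepsilon$ release the zero-slack job $J_2=(\sigma+\varepsilon,q,0)$. This is the interval $[\sigma+\varepsilon,\sigma+\varepsilon+q)$, and it overlaps the execution of $J_1$. If the algorithm keeps $J_1$, then $J_2$ expires unstarted, no further jobs are released, and $\textit{ALG}=1$. The optimum schedules both jobs in either case:
\begin{itemize}
\item if $\sigma\ge q$, run $J_1$ at time $0$ and $J_2$ at its release;
\item if $\sigma<q$, run $J_2$ first and then $J_1$ at time $\sigma+\varepsilon+q$.
\end{itemize}
In the second case the start time $\sigma+\varepsilon+q$ is at most $2q$ by the choice of $\varepsilon$, so it lies within $J_1$'s window.

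\textbf{Step 3 (the switching chain).} If instead the algorithm revokes $J_1$ and starts $J_2$, then $J_1$ is permanently lost. The adversary now repeats the trick with intervals only. Whenever the algorithm is running the interval $J_i$ (started at its release $r_i$), release at time $r_i+\delta_i$ the zero-slack interval $J_{i+1}=(r_i+\delta_i,q,0)$, where the shifts $\delta_i>0$ are chosen so that $\sum_i\delta_i$ is tiny. Consecutive intervals in the chain pairwise overlap. Each time the algorithm switches, it loses the previous interval forever, since a zero-slack job can only be started at its release time. Each time it declines to switch, the adversary stops. Staying idle or revoking without a replacement only lowers the algorithm's value.

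Hence the algorithm completes at most one interval of the chain. If it switches infinitely often (with release times converging), it completes nothing. The optimum takes the last chain interval, or any one of them if the chain is infinite, together with $J_1$ placed as in Step~2. Since all chain intervals are shifted right by at most $\sum_i\delta_i$ in total, the right endpoint of the chain stays below $\sigma+\varepsilon+\sum_i\delta_i+q$. We keep this quantity at most $2q$ when $\sigma<q$; when $\sigma\ge q$, $J_1$ runs at time $0$ before the chain anyway. So $\textit{OPT}\ge2$ while $\textit{ALG}\le1$.

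\textbf{Expected obstacle.} The main care point is the feasibility bookkeeping for the optimum's placement of $J_1$. The slack $2q$, $\varepsilon$ and the $\delta_i$ must be fixed so that $J_1$ always fits either before $\sigma$ or after the whole chain, uniformly over the algorithm's choices and over infinitely long chains. Choosing the $\delta_i$ geometric, for instance $\delta_i=\eta/2^{i}$ with $\eta<q-\sigma-\varepsilon$ in the case $\sigma<q$, should settle this.
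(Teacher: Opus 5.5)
Your proof is correct and is essentially the paper's argument: a bait job $(0,q,2q)$, a zero-slack job released just after the algorithm starts it, and the same two placements for the optimum depending on whether the start time is before or after $q$. For the never-started case, the paper releases a zero-slack job at time $2q$ to get $1$ versus $2$; your $0$ versus $1$ works equally well. Your Step~3 chain is unnecessary: once the algorithm revokes $J_1$ for $J_2$ it already has $\textit{ALG}\le 1$ against $\textit{OPT}=2$, so the adversary can stop there, which also removes the infinite (Zeno) branch and the $\delta_i$ bookkeeping.
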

\begin{proof}
Fix any deterministic online algorithm $\textit{ALG}$, and release a job
\[
A=(r_A,\;q,\;2q).
\]
If $\textit{ALG}$ starts $A$ at some time $\alpha\in [r_A,r_A+q)$, then release an urgent job
\[
B=(\alpha+\varepsilon,\;q,\;0),
\]
where $\varepsilon>0$ is sufficiently small. If $\textit{ALG}$ continues processing $A$, then it misses $B$. If it revokes $A$ and processes $B$, then $A$ is lost forever. Hence $\textit{ALG}$ completes at most one job. In contrast, $\textit{OPT}$ completes both jobs by scheduling $B$ in $[\alpha+\varepsilon,\alpha+\varepsilon+q)$ and $A$ in $[r_A+2q,r_A+3q)$.

If $\textit{ALG}$ starts $A$ at some time $\alpha\in [r_A+q,r_A+2q)$, then release
\[
B=(\alpha+\varepsilon,\;q,\;0),
\]
again with $\varepsilon>0$ sufficiently small. The algorithm completes at most one of $A$ and $B$, while $\textit{OPT}$ completes both jobs by scheduling $A$ in $[r_A,r_A+q)$ and $B$ in $[\alpha+\varepsilon,\alpha+\varepsilon+q)$.

Finally, if $\textit{ALG}$ has not started $A$ before time $r_A+2q$, then release
\[
B=(r_A+2q,\;q,\;0).
\]
At time $r_A+2q$, the algorithm can complete at most one of $A$ and $B$, while $\textit{OPT}$ completes $A$ in $[r_A,r_A+q)$ and $B$ in $[r_A+2q,r_A+3q)$.

Thus in every case $\textit{ALG}\le 1$ and $\textit{OPT}=2$, proving the theorem.
\end{proof}

The same terminal gadget can be nested inside successively larger bait jobs. We choose processing times
\[
\ell_1<\ell_2<\cdots<\ell_k
\]
recursively. Let $T_1=3\ell_1$. For each $i\ge 2$, choose $\ell_i>T_{i-1}$ and define
\[
T_i=2\ell_i+T_{i-1}.
\]
The number $T_i$ is the span reserved for the $i$-level gadget.

\medskip
\begin{theorem}\label{thm:k-length-revoke-lower}\label{thm:negative-unweighted}
For every $k\ge 1$, there exists an instance with at most $k$ distinct processing times such that $\textit{ALG}$ completes at most one job while $\textit{OPT}$ completes at least $k+1$ jobs.
Therefore, no deterministic online algorithm can guarantee a competitive ratio greater than $1/(k+1)$ on instances with at most $k$ distinct processing times.
\end{theorem}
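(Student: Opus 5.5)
We build the instance by induction on the number of levels, using the lengths $\ell_1<\cdots<\ell_k$ and spans $T_i$ defined just above. A \emph{level-$i$ gadget} launched at time $s$ releases jobs adaptively with processing times in $\{\ell_1,\ldots,\ell_i\}$, all inside the window $[s,s+T_i)$. The inductive claim $\mathcal{P}(i)$ has two parts. First, $\textit{OPT}$ can complete $i+1$ of the gadget's jobs inside $[s,s+T_i)$. Second, and this is the strengthened part, $\textit{ALG}$ completes at most one job from the set consisting of the gadget's jobs together with any job that $\textit{ALG}$ is processing at time $s$ and that would still need to run past the end of the gadget. The second part lets a gadget sit inside a bait job without the bait job being counted separately. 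Taking $i=k$ and launching at time $0$ with $\textit{ALG}$ idle gives the theorem.

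\textbf{Base case.} The level-1 gadget is the construction of Theorem~\ref{thm:equal-length-revoke} with $q=\ell_1$, so $T_1=3\ell_1$ is its span. I will re-check that every case of that proof still works when $\textit{ALG}$ is busy with an outside job $X$ at time $s$ and $X$ would run past $s+T_1$. Each case ends by releasing an urgent zero-slack job $B$. Since $X$ cannot finish before $B$'s window, $\textit{ALG}$ can complete at most one of $A$, $B$ and $X$.

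\textbf{Inductive step.} At time $s$ release the bait job $A=(s,\ell_i,\ell_i+T_{i-1})$, with processing time $\ell_i$ and slack $\ell_i+T_{i-1}$.
\begin{itemize}
\item \emph{$\textit{ALG}$ starts $A$ at some time $\alpha\le s+\ell_i+T_{i-1}$.} Immediately launch a level-$(i-1)$ gadget at $\alpha+\varepsilon$. Because $\ell_i>T_{i-1}$, the whole gadget lies inside $\textit{ALG}$'s run of $A$, so $A$ is exactly an outside job covered by $\mathcal{P}(i-1)$. Hence $\textit{ALG}$ completes at most one job among $A$ and the inner gadget's jobs.
\item \emph{Placing $A$ for $\textit{OPT}$ when $\alpha\ge s+\ell_i$.} $\textit{OPT}$ runs $A$ in $[s,s+\ell_i)$, before the inner gadget.
\item \emph{Placing $A$ for $\textit{OPT}$ when $\alpha< s+\ell_i$.} The inner gadget ends before $s+\ell_i+T_{i-1}+\varepsilon$. $\textit{OPT}$ starts $A$ there, which is within $A$'s slack up to an $\varepsilon$ adjustment. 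To absorb $\varepsilon$ I would enlarge the slack slightly, using the strict gap $\ell_i>T_{i-1}$.
\item \emph{$\textit{ALG}$ has not started $A$ by its latest start time.} Launch the inner gadget just after that time, when $A$ is dead for $\textit{ALG}$. $\textit{OPT}$ again runs $A$ in $[s,s+\ell_i)$.
\end{itemize}
In every case $\textit{OPT}$ completes $A$ plus the $i$ inner-gadget jobs, for $i+1$ in total.

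\textbf{Span bound.} The latest the inner gadget ends is about $s+\ell_i+T_{i-1}+T_{i-1}$. This is less than $s+2\ell_i+T_{i-1}=s+T_i$ because $\ell_i>T_{i-1}$.

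\textbf{Outer-job part of $\mathcal{P}(i)$.} Suppose $\textit{ALG}$ is running a job $X$ at time $s$ and $X$ would run past $s+T_i$. For $\textit{ALG}$ to start $A$ at all, it must revoke $X$. If $\textit{ALG}$ keeps $X$ instead, it never starts $A$, and we are in the last case. There the inner gadget is launched while $\textit{ALG}$ is still running $X$, and $X$ still runs past the inner gadget's end. So $X$ is an outside job for the inner gadget, and $\mathcal{P}(i-1)$ gives at most one completion among $X$ and the inner jobs. In all cases, at most one job among $X$, $A$ and the inner jobs completes.

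\textbf{Main obstacle.} The step I expect to be delicate is formulating this strengthened invariant precisely. In particular, I must make sure that an outer job which is revoked and then "lost" is accounted for consistently at every level. I would nail this down first in the base case before running the induction.
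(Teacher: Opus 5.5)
Your proposal is correct and follows essentially the paper's construction: the same bait job $(s,\ell_i,\ell_i+T_{i-1})$, the same adaptive cases for when $\textit{ALG}$ starts it, and the same two placements of the bait in $\textit{OPT}$. Your outside-job strengthening of the invariant makes explicit what the paper uses implicitly when it applies the inductive hypothesis while $\textit{ALG}$ is still running $L_i$. The one change to make is in your third bullet: do not enlarge $A$'s slack, since that pushes $A$'s window past $s+T_i$ and contradicts your span bound. Instead, choose $\varepsilon< s+\ell_i-\alpha$ after seeing $\alpha$, so the inner gadget ends before $A$'s latest start time $s+\ell_i+T_{i-1}$, as the paper does.
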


\begin{proof}
We prove the following stronger inductive statement. For each $i=1,\ldots,k$ and each start time $a$, there is an adversarial gadget $G_i(a)$ contained in the interval $[a,a+T_i)$, using only processing times from $\{\ell_1,\ldots,\ell_i\}$, such that $\textit{ALG}\le 1$ while $\textit{OPT}=i+1$ on the jobs released by the gadget.

For $i=1$, the gadget $G_1(a)$ is exactly the equal-length construction from Theorem~\ref{thm:equal-length-revoke}, with $r_A=a$ and $q=\ell_1$. It is contained in $[a,a+3\ell_1)=[a,a+T_1)$, uses one processing time, and forces $\textit{ALG}\le 1$ while $\textit{OPT}=2$.

Now assume the claim holds for $i-1$, where $i\ge 2$. We construct $G_i(a)$. At time $a$, release a bait job
\[
L_i=(a,\;\ell_i,\;\ell_i+T_{i-1}).
\]
The latest feasible start time of $L_i$ is $a+\ell_i+T_{i-1}$, and the end of its window is
\[
a+2\ell_i+T_{i-1}=a+T_i.
\]

\medskip
\noindent
\textbf{Case 1.}
Suppose $\textit{ALG}$ starts $L_i$ at some time $t\in [a,a+\ell_i)$. Immediately after this, release a shifted copy $G_{i-1}(t+\varepsilon)$, where $\varepsilon>0$ is chosen sufficiently small so that
\[
t+\varepsilon+T_{i-1}<a+\ell_i+T_{i-1}
\qquad\text{and}\qquad
\varepsilon+T_{i-1}<\ell_i.
\]
If $\textit{ALG}$ keeps processing $L_i$, then it is busy throughout the whole lower-level gadget, so it completes at most $L_i$. If $\textit{ALG}$ revokes $L_i$, then $L_i$ is lost forever, and by the inductive hypothesis the lower-level gadget gives $\textit{ALG}$ at most one completed job. Thus $\textit{ALG}\le 1$.

The optimal schedule completes the $i$ jobs of $G_{i-1}(t+\varepsilon)$ first and then schedules $L_i$ late, starting at $a+\ell_i+T_{i-1}$. This is feasible because the first displayed inequality ensures that the lower-level gadget finishes before the latest feasible start time of $L_i$. Hence $\textit{OPT}=i+1$.

\medskip
\noindent
\textbf{Case 2.}
Suppose $\textit{ALG}$ starts $L_i$ at some time
\[
t\in [a+\ell_i,\;a+\ell_i+T_{i-1}].
\]
Immediately after this, release $G_{i-1}(t+\varepsilon)$, where $\varepsilon>0$ is chosen sufficiently small so that
\(
\varepsilon+T_{i-1}<\ell_i.
\)
If $\textit{ALG}$ keeps processing $L_i$, then it is busy throughout the whole lower-level gadget. If it revokes $L_i$, then $L_i$ is lost forever and the lower-level gadget gives it at most one completed job. Hence $\textit{ALG}\le 1$.

The optimal schedule runs $L_i$ early in $[a,a+\ell_i)$ and then completes the $i$ jobs of $G_{i-1}(t+\varepsilon)$. Since $t+\varepsilon>a+\ell_i$, these schedules do not overlap. Therefore $\textit{OPT}=i+1$.

\medskip
\noindent
\textbf{Case 3.}
Suppose $\textit{ALG}$ has not started $L_i$ before time $a+\ell_i+T_{i-1}$. At that time, release $G_{i-1}(a+\ell_i+T_{i-1})$. If $\textit{ALG}$ starts $L_i$ and keeps it, then it is busy throughout the lower-level gadget because $\ell_i>T_{i-1}$. If $\textit{ALG}$ revokes $L_i$, or does not start it, then $L_i$ is lost or infeasible, and the lower-level gadget gives $\textit{ALG}$ at most one completed job. Thus $\textit{ALG}\le 1$.

The optimal schedule runs $L_i$ early in $[a,a+\ell_i)$ and then completes the $i$ jobs of $G_{i-1}(a+\ell_i+T_{i-1})$. Hence $\textit{OPT}=i+1$.

In all three cases the lower-level gadget is contained in the reserved interval $[a,a+T_i)$: the only nontrivial condition is that its span $T_{i-1}$ must fit inside the execution interval of $L_i$ or inside the remaining part of $[a,a+T_i)$, which follows from $\ell_i>T_{i-1}$ and the choice of $\varepsilon$.

This completes the induction. Taking $i=k$ gives an instance with at most $k$ distinct processing times, $\textit{ALG}\le 1$, and $\textit{OPT}=k+1$. Therefore
\(
\frac{\textit{ALG}}{\textit{OPT}}\le \frac{1}{k+1}.
\)
\end{proof}

\medskip
% \begin{corollary}\label{cor:no-constant-revoke}
% No deterministic online algorithm can achieve a constant competitive ratio for unweighted throughput on one machine in the preemption-revoke model when the number of distinct processing times is unrestricted.
% \end{corollary}

% \begin{proof}
% For every $k\ge 1$, Theorem~\ref{thm:k-length-revoke-lower} gives an instance on which $\textit{ALG}/\textit{OPT}\le 1/(k+1)$. This ratio tends to zero as $k$ grows.
% \end{proof}

\section{Conclusion}
    \label{sec:conclusion}
    
We have presented a number of results for the throughput maximization  problem for unweighted and weighted instances. These results are compared with results for the interval selection problem. Our results also raise some interesting open questions. 

The Woeginger \cite{Woeginger94} deterministic competitive ratio for interval selection for C-Benevolent and D-Benevolent  instances was adapted to the throughput problem resulting in a $\frac{1}{5}$ deterministic competitive ratio. The optimality of the $\frac{1}{4}$ ratio for interval selection implies the same $\frac{1}{4}$ limitation for the 
throughput problem. What is the optimal competitive ratio for C-Benevolent weight functions? The extension from interval selection to throughput is essentially achieved by replacing revoking by restarting and then maintaining a queue  of jobs that can still be executed by their deadlines. This same idea and charging analysis can be seen as a way to derive the Hoogeveen et al unweighted result. In terms of ratios $\rho \geq 1$, the ratio $\rho$ for interval selection becomes $\rho' = \rho + 1$ for the analogous throughput problem. Can this idea be applied to other real-time problems?  
%Can  say that a similar  idea is what is giving the 1/2k posiitve results for unweighted throughpiutb genertaliing unweighted interval selection? For the unweuighted case, 

We have assumed the clairvoyant hard deadline model for the throughput problem. It would be interesting to study alternative assumptions for the throughput problem with restarting such as partial profit for incomplete jobs or costs for exceeding deadlines. 

We have introduced a new real-time model in which an algorithm is guaranteed some amount of ``advance notice''. In this model, with at least $t_i = p_i/2$ advance notice, we can realize the same $\frac{1}{5}$ competitive ratio for the throughput algorithm for proportional instances without using preemption. Can we utilize advance notice for other real-time problems so as to eliminate the need for preemption? What if there is a cost for insufficient advance notice? 

In contrast to interval selection, it is not possible to have a constant competitive algorithm {\it with revoking} for the unweighted throughput problem when the number of distinct processing times is unrestricted. For instances with at most $k$ distinct processing times, our lower bound shows that no deterministic real-time algorithm with revoking can obtain a ratio better than $\frac{1}{k+1}$. On the positive side, our $k$-length EDF algorithm is $\frac{1}{2k}$-competitive. Thus the bounded-processing-time case admits a ratio depending only on $k$, while the unrestricted case does not admit any constant competitive ratio. Closing the gap between $\frac{1}{k+1}$ and $\frac{1}{2k}$ remains open.

While our focus has been on deterministic algorithms, we ask which of  our negative and positive results can be strengthened by allowing  randomization. In particular, to what extent can even 1-bit of randomness lead to improved competitive ratios. The Fung et al \cite{fung_improved_2014} and Chrobak et al \cite{chrobak_online_2007} barely random algorithms for the single processing time weighted and unweighted throughout problems suggest that randomization will lead to improved competitive ratios for other variants of the throughput problem. 
%With regard to the unweighted case, Chrobak et al \cite{ChrobakJST07} provide an optimmlly competitive algorithm in the real-time (with restarting) model. Is there a  unweighted single length instances. 

\section*{Acknowledgements}

We thank Andrew Tam for his insightful comments which led to a clearer presentation of our analysis of the \(\tau\)-\textsc{Persist} algorithm for C-benevolent instances.
    \newpage
    \bibliographystyle{plain}
    \bibliography{ref} 

@inproceedings{BorodinK23,
  author       = {Allan Borodin and
                  Christodoulos Karavasilis},
  editor       = {Jaroslaw Byrka and
                  Andreas Wiese},
  title        = {Any-Order Online Interval Selection},
  booktitle    = {Approximation and Online Algorithms - 21st International Workshop,
                  {WAOA} 2023, Amsterdam, The Netherlands, September 7-8, 2023, Proceedings},
  series       = {Lecture Notes in Computer Science},
  volume       = {14297},
  pages        = {175--189},
  publisher    = {Springer},
  year         = {2023},
  }

@article{VeselyCJS22,
  author       = {Pavel Vesel{\'{y}} and
                  Marek Chrobak and
                  Lukasz Jez and
                  Jir{\'{\i}} Sgall},
  title        = {A {\textbackslash}({\textbackslash}boldsymbol\{{\textbackslash}phi
                  \}{\textbackslash}) -Competitive Algorithm for Scheduling Packets
                  with Deadlines},
  journal      = {{SIAM} J. Comput.},
  volume       = {51},
  number       = {5},
  pages        = {1626--1691},
  year         = {2022}
}

@inproceedings{FaigleN91,
  author        = {Ulrich Faigle and
               Willem M. Nawijn}, 
  title         = {Greedy k-decomposition of interval orders},
  booktitle     = {Second Twente Workshop on Graphs and Combinatorial Optimization},
  pages         = {53-56},
  year          = {1991},
}

@article{karamata,
 author = {Karamata, J.},
 title = {Sur une in{\'e}galit{\'e} relative aux fonctions convexes.},
 fjournal = {Publications Math{\'e}matiques de l'Universit{\'e} de Belgrade},
 journal = {Publ. Math. Univ. Belgrade},
 volume = {1},
 pages = {145--148},
 year = {1932},
 language = {French},
 zbMATH = {2549602},
 JFM = {58.0211.01}
}

@article{BaruahKMMRRSW92,
  author        = {Sanjoy K. Baruah and
               Gilad Koren and
               Decao Mao and
               Bhubaneswar Mishra and
               Arvind Raghunathan and
               Louis E. Rosier and
               Dennis E. Shasha and
               Fuxing Wang},
  title         = {On the Competitiveness of On-Line Real-Time Task Scheduling},
  journal       = {RealTimeSyst},
  volume        = {4},
  number        = {2},
  pages         = {125--144},
  year          = {1992},
}

@article{ChinFGHHJLTZZZ15,                                                                                                                                                                              
        title = {Online competitive algorithms for maximizing weighted throughput of unit jobs},                                                                                                             
        volume = {4},                                                                                                                                                                                        
        issn = {1570-8667},                                                                                                                                                                                  
        url = {https://www.sciencedirect.com/science/article/pii/S1570866705000250},                                                                                                                         
        doi = {10.1016/j.jda.2005.03.005},                                                                   
        number = {2},                                                                                                                                                                                        
        urldate = {2026-06-25},                                                                                                                                                                              
        journal = {Journal of Discrete Algorithms},                                                                                                                                                          
        author = {Chin, Francis Y. L. and Chrobak, Marek and Fung, Stanley P. Y. and Jawor, Wojciech and Sgall, Jiří and Tichý, Tomáš},                                                                      
        month = jun,                                                                                                                                                                                         
        year = {2006},                                                                                                                                       
        pages = {255--276},                                                                                
}

@inproceedings{KorenS92,
  author        = {Gilad Koren and
               Dennis E. Shasha},
  title         = {D\({}^{\mbox{over}}\); an optimal on-line scheduling algorithm for
               overloaded real-time systems},
  booktitle     = {Proceedings of the Real-Time Systems Symposium - 1992, Phoenix, Arizona,     
               USA, December 1992},
  pages         = {290--299},
  year          = {1992}, 
  publisher     = {{IEEE} Computer Society},
}

@article{KalyanasundaramP03,
  author        = {Bala Kalyanasundaram and
                  Kirk Pruhs},
  title         = {Maximizing job completions online},
  journal       = {JAlgorithms},
  volume        = {49},
  number        = {1},
  pages         = {63--85},
  year          = {2003},
}

@inproceedings{BaruhaHS94,
  author        = {Sanjoy K. Baruah and
                  Jayant R. Haritsa and
                  Nitin Sharma},
  title         = {On-Line Scheduling to Maximize Task Completions},
  booktitle     = {Proceedings of the 15th {IEEE} Real-Time Systems Symposium {(RTSS}              
                  '94), San Juan, Puerto Rico, December 7-9, 1994},
  pages         = {228--236},
  year          = {1994},
  publisher     = {{IEEE} Computer Society},
}

@inproceedings{LucierMNY13,
  author       = {Brendan Lucier and
                  Ishai Menache and
                  Joseph Naor and
                  Jonathan Yaniv},
  editor       = {Guy E. Blelloch and
                  Berthold V{\"{o}}cking},
  title        = {Efficient online scheduling for deadline-sensitive jobs: extended
                  abstract},
  booktitle    = {25th {ACM} Symposium on Parallelism in Algorithms and Architectures,
                  {SPAA} '13, Montreal, QC, Canada - July 23 - 25, 2013},
  pages        = {305--314},
  publisher    = {{ACM}},
  year         = {2013},
}

@article{chrobak_online_2007,
	title = {Online {Scheduling} of {Equal}‐{Length} {Jobs}: {Randomization} and {Restarts} {Help}},
	volume = {36},
	issn = {0097-5397, 1095-7111},
	shorttitle = {Online {Scheduling} of {Equal}‐{Length} {Jobs}},
	url = {http://epubs.siam.org/doi/10.1137/S0097539704446608},
	doi = {10.1137/S0097539704446608},
	language = {en},
	number = {6},
	urldate = {2026-03-27},
	journal = {SIAM Journal on Computing},
	author = {Chrobak, Marek and Jawor, Wojciech and Sgall, Jiří and Tichý, Tomáš},
	month = jan,
	year = {2007},
	pages = {1709--1728},
}

@article{GarayGKMY97,
  author        = {Juan A. Garay and
                  Inder S. Gopal and
                  Shay Kutten and
                  Yishay Mansour and
                  Moti Yung},
  title         = {Efficient On-Line Call Control Algorithms},
  journal       = {Journal of Algorithms},
  volume        = {23},
  number        = {1},
  pages         = {180--194},
  year          = {1997},
}

@article{HoogeveenPW00,
	title = {On-line scheduling on a single machine: maximizing the number of early jobs},
	volume = {27},
	copyright = {https://www.elsevier.com/tdm/userlicense/1.0/},
	issn = {01676377},
	shorttitle = {On-line scheduling on a single machine},
	url = {https://linkinghub.elsevier.com/retrieve/pii/S0167637700000614},
	doi = {10.1016/S0167-6377(00)00061-4},
	language = {en},
	number = {5},
	urldate = {2026-03-27},
	journal = {Operations Research Letters},
	author = {Hoogeveen, Han and Potts, Chris N. and Woeginger, Gerhard J.},
	month = dec,
	year = {2000},
	pages = {193--197},
}

@article{EmekHR2016,
	title = {Space-{Constrained} {Interval} {Selection}},
	volume = {12},
	issn = {1549-6325, 1549-6333},
	url = {https://dl.acm.org/doi/10.1145/2886102},
	doi = {10.1145/2886102},
	language = {en},
	number = {4},
	urldate = {2026-03-27},
	journal = {ACM Transactions on Algorithms},
	author = {Emek, Yuval and Halldórsson, Magnús M. and Rosén, Adi},
	month = sep,
	year = {2016},
	pages = {1--32},
}

@inproceedings{CanettiI98,
  title={Bounding the power of preemption in randomized scheduling},
  author={Canetti, Ran and Irani, Sandy},
  booktitle={Proceedings of the twenty-seventh annual ACM symposium on Theory of computing},
  pages={606--615},
  year={1995}
}

@article{fung_improved_2014,
	title = {Improved {Randomized} {Online} {Scheduling} of {Intervals} and {Jobs}},
	volume = {55},
	issn = {1433-0490},
	url = {https://doi.org/10.1007/s00224-013-9528-2},
	doi = {10.1007/s00224-013-9528-2},
	language = {en},
	number = {1},
	urldate = {2026-04-02},
	journal = {Theory of Computing Systems},
	author = {Fung, Stanley P. Y. and Poon, Chung Keung and Zheng, Feifeng},
	month = jul,
	year = {2014},
	pages = {202--228},
}

@article{EpsteinL10,
  author        = {Leah Epstein and 
               Asaf Levin},
  title         = {Improved randomized results for the interval selection problem},
  journal       = {Theoretical Computer Science},
  volume        = {411},
  number        = {34-36},
  pages         = {3129--3135},
  year          = {2010},
}

@article{goldman_online_2000,
	title = {Online {Scheduling} with {Hard} {Deadlines}},
	volume = {34},
	issn = {0196-6774},
	url = {https://www.sciencedirect.com/science/article/pii/S019667749991060X},
	doi = {10.1006/jagm.1999.1060},
	number = {2},
	urldate = {2026-04-23},
	journal = {Journal of Algorithms},
	author = {Goldman, Sally A and Parwatikar, Jyoti and Suri, Subhash},
	month = feb,
	year = {2000},
	pages = {370--389},
}

@article{Woeginger94,
	title = {On-line scheduling of jobs with fixed start and end times},
	volume = {130},
	issn = {0304-3975},
	url = {https://www.sciencedirect.com/science/article/pii/0304397594901503},
	doi = {10.1016/0304-3975(94)90150-3},
	number = {1},
	urldate = {2026-04-28},
	journal = {Theoretical Computer Science},
	author = {Woeginger, Gerhard J.},
	month = aug,
	year = {1994},
	pages = {5--16},
}

@incollection{Lenstra77,
  author        = {J. K. Lenstra and A. H. G. Rinnooy Kan and P. Brucker},
  title         = {Complexity of Machine Scheduling Problems},
  booktitle     = {Annals of Discrete Mathematics},
  volume        = {1},
  pages         = {343--362},
  year          = {1977},
}

@article{Lawler90,
  author        = {E. L. Lawler},
  title         = {A Dynamic Programming Algorithm for Preemptive Scheduling of a Single Machine to Minimize the Number of Late Jobs},
  journal       = {Annals of Operations Research},
  volume        = {26},
  number        = {1},
  pages         = {125--133},
  year          = {1990},
}

@article{Baptiste99,
  author        = {Philippe Baptiste},
  title         = {An {$O(n^4)$} Algorithm for Preemptive Scheduling of a Single Machine to Minimize the Number of Late Jobs},
  journal       = {Operations Research Letters},
  volume        = {24},
  number        = {3},
  pages         = {175--180},
  year          = {1999},
}

@inproceedings{Lipton,
  author        = {Richard J. Lipton and Andrew Tomkins},
  title         = {Online Interval Scheduling},
  booktitle     = {Proceedings of the Fifth Annual {ACM-SIAM} Symposium on Discrete Algorithms},
  pages         = {302--311},
  year          = {1994},
}
\end{document}